\newcommand{\wt}{{\mathrm{wt}}} 
\newcommand{\lcm}{{\rm lcm}}
\newcommand{\tr}{{\mathrm{Tr}}}
\newcommand{\gf}{{\mathrm{GF}}}
\newcommand{\C}{{\mathcal{C}}}
\newcommand{\bc}{{\mathbf{c}}}
\newtheorem{theorem}{Theorem}
\newtheorem{lemma}[theorem]{Lemma}
\newtheorem{example}{Example}
\begin{document}

\title{The Dimension and Minimum Distance of Two Classes of Primitive BCH Codes
\thanks{C. Ding's research was supported by
The Hong Kong Research Grants Council, Proj. No. 16300415. 
C. Fan was supported by the Natural Science Foundation of China, Proj. No. A011601. 
Z. Zhou's research was supported by
the Natural Science Foundation of China, Proj. No. 61201243, and also the Application Fundamental Research Plan Project of Sichuan Province under Grant No. 2013JY0167. }
}

\author{Cunsheng Ding, \and Cuiling Fan, \and Zhengchun Zhou 
\thanks{C. Ding is with the Department of Computer Science
                                                  and Engineering, The Hong Kong University of Science and Technology,
                                                  Clear Water Bay, Kowloon, Hong Kong, China (email: cding@ust.hk).}
\thanks{C. Fan is with the School of Mathematics, Southwest Jiaotong University,
Chengdu, 610031, China (email: fcl@swjtu.edu.cn). }
\thanks{Z. Zhou is with the School of Mathematics, Southwest Jiaotong University,
Chengdu, 610031, China (email: zzc@home.swjtu.edu.cn). }

}

\date{\today}
\maketitle

\begin{abstract} 
Reed-Solomon codes, a type of BCH codes, are widely employed in  communication systems, storage devices and 
consumer electronics. This fact demonstrates the importance of BCH codes -- a family of cyclic codes -- in practice. 
In theory, BCH codes are among the best cyclic codes in terms of their error-correcting capability.  A subclass of  
BCH codes are the narrow-sense primitive BCH codes. However, the dimension and minimum distance of these codes 
are not known in general. The objective of this paper is to determine the dimension and minimum distances of two 
classes of narrow-sense primitive BCH codes with design distances $\delta=(q-1)q^{m-1}-1-q^{\lfloor (m-1)/2\rfloor}$ 
and $\delta=(q-1)q^{m-1}-1-q^{\lfloor (m+1)/2\rfloor}$. The weight distributions of some of these BCH codes are 
also reported. As will be seen, the two classes of BCH codes are sometimes optimal and sometimes among the 
best linear codes known.   
\end{abstract}

\begin{keywords}
BCH codes, cyclic codes, linear codes, secret sharing, weight distribution, weight enumerator. 
\end{keywords}

\section{Introduction}\label{sec-intro} 

Throughout this paper, let $q$ be a power of a prime $p$. An $[n,k,d]$ linear code $\C$ over $\gf(q)$ is a $k$-dimensional subspace of $\gf(q)^n$ with minimum Hamming distance $d$. 
Let $A_i$ denote the number of codewords with Hamming weight $i$ in a linear code
$\C$ of length $n$. The {\em weight enumerator} of $\C$ is defined by
$$
1+A_1z+A_2z^2+ \cdots + A_nz^n.
$$
The {\em weight distribution} of $\C$ is the sequence $(1,A_1,\ldots,A_n)$.

An $[n,k]$ linear code $\C$ over $\gf(q)$ is called {\em cyclic} if 
$(c_0,c_1, \cdots, c_{n-1}) \in \C$ implies $(c_{n-1}, c_0, c_1, \cdots, c_{n-2})$  
$\in \C$.  
We can identify a vector $(c_0,c_1, \cdots, c_{n-1}) \in \gf(q)^n$ 
with  
$$ 
c_0+c_1x+c_2x^2+ \cdots + c_{n-1}x^{n-1} \in \gf(q)[x]/(x^n-1).  
$$
In this way, a code $\C$ of length $n$ over $\gf(q)$ corresponds to a subset of the quotient ring 
$\gf(q)[x]/(x^n-1)$. 
A linear code $\C$ is cyclic if and only if the corresponding subset in $\gf(q)[x]/(x^n-1)$ 
is an ideal of the ring $\gf(q)[x]/(x^n-1)$. 

It is well-known that every ideal of $\gf(q)[x]/(x^n-1)$ is principal. Let $\C=\langle g(x) \rangle$ be a 
cyclic code, where $g(x)$ is monic and has the smallest degree among all the 
generators of $\C$. Then $g(x)$ is unique and called the {\em generator polynomial,} 
and $h(x)=(x^n-1)/g(x)$ is referred to as the {\em check polynomial} of $\C$. 

From now on, let $m>1$ be a positive integer, and let $n=q^m-1$. 
Let $\alpha$ be a generator of $\gf(q^m)^*$, which is the multiplicative group of $\gf(q^m)$. 
For any $i$ with $0 \leq i \leq q^m-2$, let $m_i(x)$ denote the minimal polynomial of $\alpha^i$ 
over $\gf(q)$. For any $2 \leq \delta < n$, define 
$$ 
g_{(q,m,\delta)}(x)=\lcm(m_{1}(x), m_{2}(x), \cdots, m_{\delta-1}(x)),    
$$
where $\lcm$ denotes the least common multiple of these minimal polynomials. 
We also define 
$$ 
\tilde{g}_{(q,m,\delta)}(x)=(x-1)g_{(q,m,\delta)}(x).  
$$

Let $\C_{(q, m, \delta)}$ and $\tilde{\C}_{(q, m, \delta)}$ denote the cyclic code of length $n$ with generator 
polynomial $g_{(q, m,\delta)}(x)$ and $\tilde{g}_{(q,m,\delta)}(x)$, respectively.  
This set $\C_{(q, m, \delta)}$ is called a \emph{narrow-sense primitive BCH code} with \emph{design distance} $\delta$, 
and $\tilde{\C}_{(q, m, \delta)}$ is called  a \emph{primitive BCH code} with \emph{design distance} $\delta$. 

By definition, the code $\tilde{\C}_{(q, m, \delta)}$ is a subcode of $\C_{(q, m, \delta)}$ and 
$$ 
\dim(\tilde{\C}_{(q, m, \delta)})=\dim(\C_{(q, m, \delta)})-1.
$$

Clearly, $\C_{(q, m, \delta)}$ and $\C_{(q, m, \delta')}$ are identical for two different $\delta$ and $\delta'$, 
as long as $\delta$ and $\delta'$ are from the same $q$-cyclotomic coset modulo $n$.  
The largest design distance of $\C_{(q, m, \delta)}$ is called the \emph{Bose distance} of the code and  
denoted by $d_B$. 
 
The cyclic codes $\C_{(q, m, \delta)}$ are treated  in almost every book on coding theory. However, the following 
questions about the code $\C_{(q, m, \delta)}$ are still open in general. 

\begin{enumerate}
\item What is the dimension of $\C_{(q, m, \delta)}$? 
\item What is the Bose distance (i.e., the maximum design distance) of $\C_{(q, m, \delta)}$? 
\item What is the minimum distance $d$ (i.e., the minimum nonzero weight) of $\C_{(q, m, \delta)}$? 
\end{enumerate} 

The dimension of $\C_{(q, m, \delta)}$ is known 
when $\delta$ is small, and is open in general. There are lower bounds on the dimension of $\C_{(q,m, \delta)}$, which 
are very bad in many cases. The minimum distance $d$ of $\C_{(q, m, \delta)}$ is known only in a few cases. Only 
when $\delta$ is very small or when $\C_{(q,m, \delta)}$ is the Reed-Solomon code, both the dimension and minimum distance 
of $\C_{(q,m, \delta)}$ are known.  Hence, we have very limited knowledge of the narrow-sense primitive BCH codes, not to mention BCH codes in general. Thus, BCH codes are far from being well understood and studied. For known results on BCH codes, the reader is referred to  \cite{Charp90,Charp98,Ding15,DDZ15}. 

In the 1990's, there were a few papers on the primitive BCH codes \cite{ACS92,AS94,Charp90,YH96,YF,YZ}. However, in the 
last sixteen years, little progress on the study of these codes has been made. As pointed out by Charpin in \cite{Charp98}, 
it is a well-known hard problem to determine the minimum distance of primitive BCH codes. 

The objective of this paper is to determine the dimensions and minimum distances of the codes $\C_{(q, m, \delta_i)}$ and 
$\tilde{\C}_{(q, m, \delta_i)}$ with design distances $\delta_2=(q-1)q^{m-1}-1-q^{\lfloor (m-1)/2\rfloor}$ 
and $\delta_3=(q-1)q^{m-1}-1-q^{\lfloor (m+1)/2\rfloor}$. The weight distributions of some of these BCH codes 
are also reported. 

An $[n, k, d]$ linear code is said to be \emph{optimal} if its parameters meet a bound on linear codes, and \emph{almost 
optimal} if the parameters $[n, k+1, d]$ or $[n, k, d+1]$ meet a bound on linear codes.  
To investigate the optimality of the codes studied in this paper, we compare them with the tables of the best linear codes 
known maintained by Markus Grassl at http://www.codetables.de, which is called the {\em Database} later in this paper. 
Sometimes we will employ the tables of best cyclic codes in \cite{Dingbk15} to benchmark some BCH codes dealt with  
in this paper.   

\section{Some auxiliary results} 

The \emph{$q$-cyclotomic coset} modulo $n$ containing $i$ is defined by 
$$ 
C_i=\{ i q^{j} \bmod{n}: 0 \le j < \ell_i\},  
$$
where $\ell_i$ is the smallest positive integer such that $q^{\ell_i} i \equiv i \pmod{n}$, and is called the 
\emph{size} of $C_i$. The smallest integer in $C_i$ is called the \emph{coset leader} of $C_i$. 

It is easily seen that the minimal polynomial $m_i(x)$ of $\alpha^i$ over $\gf(q)$ is given by 
$$ 
m_i(x)=\prod_{j \in C_i} (x - \alpha^j).
$$ 

According to the BCH bound, the minimum distance $d$ of the code $\C_{(q, m, \delta)}$ satisfies 
$$ 
d \ge d_B \ge \delta, 
$$  
where $d_B$ denotes the Bose distance of the code $\C_{(q, m, \delta)}$. 
In some cases the difference $d-\delta$ is very small or zero. In many cases the 
difference $d - \delta$ is very large and in such cases the design distance does not give much information on the 
minimum distance $d$, but the Bose distance $d_B$ may be very close to the minimum distance. In fact, we have 
the following conjecture \cite{Charp98}. 

\vspace{0.3cm}
\noindent 
{\bf Charpin's Conjecture:} The minimum distance $d \le d_B + 4$ for the narrow-sense primitive BCH codes. 

\vspace{0.3cm}
In view of this 
conjecture, it is very valuable to determine the Bose distance for the narrow-sense primitive BCH codes.    

Given a design distance $\delta$, it is a difficult problem to determine the Bose distance $d_B$, not to mention the 
minimum distance $d$. However, we have $d_B = \delta$ if $\delta$ is a coset leader. Therefore, it is imperative  
to choose the design distance to be a coset leader.  In the next two sections, we will choose the design distances of 
several classes of BCH codes in this way.

\section{The parameters of the BCH codes $\C_{(q,\, m,\, \delta_2)}$ and $\tilde{\C}_{(q,\, m,\, \delta_2)}$, 
where $\delta_2=(q-1)q^{m-1}-1-q^{\lfloor (m-1)/2\rfloor}$} 

We need to prove a few lemmas before stating and proving the main results of this section. The following two lemmas are fundamental, and were proved in \cite{Ding15}.  

\begin{lemma}\label{lem-1stcosetleader} 
The largest $q$-cyclotomic coset leader modulo $n=q^m-1$ is $\delta_1=(q-1)q^{m-1}-1.$ Furthermore, $|C_{\delta_1}|=m.$
\end{lemma}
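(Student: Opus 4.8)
The plan is to work throughout with base-$q$ expansions of the integers modulo $n$ and with the standard fact that multiplication by $q$ acts as a cyclic shift on the digits. Concretely, I would write each $i$ with $0\le i\le n-1$ as $i=\sum_{t=0}^{m-1}a_tq^t$ with $0\le a_t\le q-1$. Since $q^m\equiv 1\pmod n$, one computes $qi\equiv\sum_{t=0}^{m-1}a_{(t-1)\bmod m}\,q^t\pmod n$, so the elements of $C_i$ are exactly the integers obtained by cyclically rotating the digit string $(a_0,a_1,\dots,a_{m-1})$. Consequently the coset leader of $C_i$ is the numerically smallest among all cyclic rotations of this string, and $\ell_i$ is the number of distinct rotations. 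I would state this correspondence first, since every subsequent step relies on it.

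Next I would record the base-$q$ expansion of $\delta_1$. Because $(q-1)q^{m-1}-1=q^m-q^{m-1}-1$ and $q^m-1$ has all digits equal to $q-1$, subtracting $q^{m-1}$ shows that $\delta_1$ has digit $q-2$ in the most significant position $m-1$ and digit $q-1$ in every other position. This string has its unique smallest digit $q-2$ sitting in the top position, so every nontrivial rotation places a larger digit $q-1$ there; since the most significant digit dominates the base-$q$ value, each such rotation is strictly larger. Hence $\delta_1$ is the minimal rotation and therefore a coset leader.

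To see that no coset leader exceeds $\delta_1$, I would examine the integers in the range $[\delta_1+1,\,n-1]=[(q-1)q^{m-1},\,q^m-2]$. Every such $i$ has top digit $a_{m-1}=q-1$, yet $i\neq q^m-1$, so at least one lower digit satisfies $a_j\le q-2$ for some $j<m-1$. Rotating that digit into the top position yields an element of $C_i$ of value at most $(q-2)q^{m-1}+(q^{m-1}-1)=(q-1)q^{m-1}-1<i$, so $i$ is not the smallest element of its coset and cannot be a coset leader. Combining this with the previous paragraph shows that $\delta_1$ is the largest coset leader modulo $n$.

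Finally, for the size I would argue that the digit $q-2$ occurs exactly once in the expansion of $\delta_1$ and differs from the repeated digit $q-1$. Any rotation fixing the digit string must fix the unique position of $q-2$, forcing the shift to be a multiple of $m$; hence all $m$ rotations are distinct and $\ell_{\delta_1}=m$, i.e.\ $|C_{\delta_1}|=m$. The only points needing care are the boundary cases: that the digit-shift correspondence is exact rather than merely congruential (it is, because the rotated digits still lie in $\{0,\dots,q-1\}$ and the string is never all $q-1$), and that the argument is uniform in $q$, including $q=2$ where $q-2=0$. I expect the numerical comparison ruling out larger coset leaders to be the technical heart of the proof, though it reduces to the single fact that the most significant base-$q$ digit dominates the value.
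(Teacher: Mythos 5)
Your proof is correct, and it takes the standard route. Note that the paper itself contains no proof of Lemma \ref{lem-1stcosetleader}: the result is quoted from \cite{Ding15}. Your argument does, however, coincide in spirit with the technique the paper displays for the companion results, Lemmas \ref{lem-2ndcosetleader} and \ref{lem-3rdcosetleader}: there, candidate leaders $J_i$ above the target value are written out in $q$-adic digits and eliminated by exhibiting a power $q^{m-1-\ell}$ with $J_i q^{m-1-\ell} \bmod n < J_i$ --- which is exactly your ``rotate a small digit into the most significant position'' step, phrased multiplicatively rather than as a shift of the digit string. Two small merits of your write-up relative to the paper's style: you state the rotation correspondence once and for all, including the boundary check that for $i \in \{1,\dots,n-1\}$ the digit string is never all $(q-1)$'s, so every rotation is a genuine canonical representative modulo $n$ (a point the paper's proofs use implicitly); and your argument is uniform in $q$, whereas the paper's analogous proofs split into the subcases $q=2$ and $q>2$. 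Your derivation of $|C_{\delta_1}|=m$ from the unique occurrence of the digit $q-2$ is also complete, and correctly covers $q=2$, where $q-2=0$ still differs from $q-1=1$.
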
 

\begin{theorem}
The code $\tilde{\C}_{(q, m, \delta_1)}$ has papameters $[n,\, m,\, \delta_1+1]$, 
 and meets the Griesmer bound.   
The code $\C_{(q, m, \delta_1)}$ has papameters $[n,\, m+1,\, \delta_1]$, and 
meets the Griesmer bound.   
\end{theorem}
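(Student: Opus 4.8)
The plan is to extract the dimensions from the cyclotomic coset structure, to realize both codes through a trace (irreducible cyclic code) representation, to read off all weights by counting values of the trace, and finally to check the Griesmer bound by a telescoping computation. I would begin by pinning down the generator polynomials. By Lemma~\ref{lem-1stcosetleader}, $\delta_1$ is the largest $q$-cyclotomic coset leader modulo $n$ and $|C_{\delta_1}|=m$. Hence every coset other than $C_0=\{0\}$ and $C_{\delta_1}$ has its leader inside $\{1,2,\dots,\delta_1-1\}$, so that
\[
\bigcup_{i=1}^{\delta_1-1} C_i = \{1,2,\dots,n-1\}\setminus C_{\delta_1}.
\]
This is exactly the zero set of $g_{(q,m,\delta_1)}(x)$, giving $\deg g_{(q,m,\delta_1)}=n-1-m$ and $\dim\C_{(q,m,\delta_1)}=m+1$; multiplying by $(x-1)=m_0(x)$ adjoins the zero $\alpha^0$ and yields $\dim\tilde\C_{(q,m,\delta_1)}=m$. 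The check polynomials are then $m_{\delta_1}(x)$, which is irreducible of degree $m$, and $(x-1)m_{\delta_1}(x)$, respectively.

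Next comes the key structural step. Since $(q-1)q^{m-1}=q^m-q^{m-1}$ we have $\delta_1=n-q^{m-1}\equiv(-1)q^{m-1}\pmod n$, so $\delta_1\in C_{-1}$ and therefore $C_{\delta_1}=C_{-1}$; as $\gcd(n-1,n)=1$, the element $\alpha^{\delta_1}$ has order $n$. Thus $\tilde\C_{(q,m,\delta_1)}$ is the irreducible cyclic code whose nonzeros form $C_{-1}$, and by the standard Delsarte trace representation it equals
\[
\tilde\C_{(q,m,\delta_1)}=\bigl\{\,(\tr(a\alpha^{-i}))_{i=0}^{n-1} : a\in\gf(q^m)\,\bigr\},
\]
while adjoining the zero $\alpha^0$ gives
\[
\C_{(q,m,\delta_1)}=\bigl\{\,(\tr(a\alpha^{-i})+b)_{i=0}^{n-1} : a\in\gf(q^m),\ b\in\gf(q)\,\bigr\}.
\]
The injectivity of $a\mapsto(\tr(a\alpha^{-i}))_i$ reconfirms the dimensions $m$ and $m+1$.

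The heart of the argument is the weight count, which I expect to be the most delicate step. As $i$ runs over $\{0,\dots,n-1\}$, $\alpha^{-i}$ runs over all of $\gf(q^m)^*$, so for fixed $a\neq0$ the product $a\alpha^{-i}$ likewise runs over $\gf(q^m)^*$. Using that every fiber of the surjective $\gf(q)$-linear map $\tr\colon\gf(q^m)\to\gf(q)$ has size $q^{m-1}$, one counts that for $a\neq0$ the number of $i$ with $\tr(a\alpha^{-i})=-b$ equals $q^{m-1}-1$ when $b=0$ and $q^{m-1}$ when $b\neq0$. For $\tilde\C_{(q,m,\delta_1)}$ this makes every nonzero codeword have weight $n-(q^{m-1}-1)=q^m-q^{m-1}=\delta_1+1$, a one-weight code of minimum distance $\delta_1+1$. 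For $\C_{(q,m,\delta_1)}$ the nonzero weights are $\delta_1=q^m-q^{m-1}-1$ (when $a\neq0$, $b\neq0$), $\delta_1+1$ (when $a\neq0$, $b=0$), and $n$ (when $a=0$, $b\neq0$), so its minimum distance is $\delta_1$. The only care needed here is keeping the four cases $a=0/\neq0$ and $b=0/\neq0$ straight and doing the fiber-size bookkeeping correctly.

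Finally I would verify optimality against the Griesmer bound $n\ge\sum_{i=0}^{k-1}\lceil d/q^i\rceil$. For $\tilde\C_{(q,m,\delta_1)}$ with $d=\delta_1+1=q^m-q^{m-1}$, each term $d/q^i=q^{m-i}-q^{m-1-i}$ is already an integer, and the sum telescopes to $q^m-1=n$. For $\C_{(q,m,\delta_1)}$ with $d=\delta_1=q^m-q^{m-1}-1$, the ceilings absorb the $q^{-i}$ tails, producing the terms $q^{m-i}-q^{m-1-i}$ for $1\le i\le m-1$, the value $\delta_1$ for $i=0$, and $1$ for $i=m$; these again telescope to $n$. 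Both bounds therefore hold with equality, so both codes meet the Griesmer bound.
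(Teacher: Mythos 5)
Your proposal is correct, but there is no in-paper proof to compare it against: the paper states this theorem without proof, citing \cite{Ding15} (``The following two lemmas are fundamental, and were proved in \cite{Ding15}''). Your argument is a complete, self-contained derivation along the standard lines one would expect: Lemma~\ref{lem-1stcosetleader} pins the defining set of $g_{(q,m,\delta_1)}$ down to $\{1,\dots,n-1\}\setminus C_{\delta_1}$ (your observation that no $i<\delta_1$ can lie in $C_{\delta_1}$, since $\delta_1$ is that coset's leader, is exactly the point needed), giving the dimensions $m+1$ and $m$; the trace representation exhibits $\tilde{\C}_{(q,m,\delta_1)}$ as a one-weight irreducible cyclic code of weight $(q-1)q^{m-1}=\delta_1+1$ and $\C_{(q,m,\delta_1)}$ as its augmentation with nonzero weights $\{\delta_1,\delta_1+1,n\}$; and the telescoping Griesmer computations (including $\lceil -1/q^i\rceil=0$ for $i\geq 1$ and the final $\lceil d/q^m\rceil=1$ term) check out with equality. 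What your route buys over the paper's citation is precisely this explicitness: the full weight enumerators of both codes, not just $d$, fall out of the fiber count.

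Two cosmetic points, neither affecting validity. First, with the paper's conventions the cyclic code whose check polynomial is $m_{\delta_1}(x)=m_{-1}(x)$ is $\{(\tr(a\alpha^{i}))_{i=0}^{n-1}: a\in\gf(q^m)\}$; your $\{(\tr(a\alpha^{-i}))_{i}\}$ has check polynomial $m_{1}(x)$ and is the coordinate-reversed, hence equivalent, code. Since your weight count uses only that $a\alpha^{\pm i}$ runs through $\gf(q^m)^*$ for $a\neq 0$, every parameter and the Griesmer verification are unaffected, but if you claim equality rather than equivalence you should fix the sign of the exponent. Second, your sentence assigning the check polynomials ``respectively'' reads with the two codes in the reverse order ($\C_{(q,m,\delta_1)}$ has check polynomial $(x-1)m_{\delta_1}(x)$ and $\tilde{\C}_{(q,m,\delta_1)}$ has $m_{\delta_1}(x)$); your subsequent use of them is consistent with the correct assignment, so only the wording needs repair.
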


We will need the following lemma shortly. 

\begin{lemma}\label{lem-2ndcosetleader}  
The second largest $q$-cyclotomic coset leader modulo $n$ is $\delta_2=(q-1)q^{m-1}-1-q^{\lfloor (m-1)/2\rfloor}$. Furthermore, 
\begin{eqnarray*}
\left|C_{\delta_2}\right|= \left\{ 
\begin{array}{ll}
m & \mbox{ if $m$ is odd,} \\
\frac{m}{2} & \mbox{ if $m$ is even.} 
\end{array}
\right. 
\end{eqnarray*}
\end{lemma}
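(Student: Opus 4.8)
The plan is to translate the problem into the combinatorics of base-$q$ digit strings. Write each residue $a$ with $0 \le a \le n-1$ in base $q$ as $a = \sum_{i=0}^{m-1} a_i q^i$ with $0 \le a_i \le q-1$. Since $q^m \equiv 1 \pmod{n}$, multiplication by $q$ modulo $n$ cyclically rotates the digit vector $(a_0,\dots,a_{m-1})$, so the $q$-cyclotomic coset $C_a$ is exactly the set of values of all cyclic rotations of this vector, and the coset leader is the rotation of least value. Because the value of a length-$m$ base-$q$ string is an increasing lexicographic function of its digits read from the top position $a_{m-1}$ downward, the coset leader is the rotation that is lexicographically smallest read from the most significant digit; in particular its leading digit $a_{m-1}$ equals $\min_i a_i$.

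First I would pin down the leading digit of any coset leader $a$ exceeding $\delta_2$. Since $a \le \delta_1 = (q-1)q^{m-1}-1$ forces $a_{m-1} \le q-2$, while $a_{m-1} \le q-3$ gives $a \le (q-2)q^{m-1}-1 < \delta_2$, every coset leader in the open interval $(\delta_2, \delta_1)$ has $a_{m-1} = q-2$. As this is the minimal digit, all digits lie in $\{q-2, q-1\}$, so $a = \delta_1 - \sum_{i \in S'} q^i$, where $S'$ is the set of non-leading positions carrying the digit $q-2$. Writing $t = \lfloor (m-1)/2\rfloor$, the condition $a > \delta_2 = \delta_1 - q^t$ then forces $\sum_{i \in S'} q^i < q^t$, hence $S' \subseteq \{0,1,\dots,t-1\}$.

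The heart of the argument, and the step I expect to be the main obstacle, is to rule out $S' \ne \emptyset$ using the minimal-rotation constraint rather than mere size estimates; naively placing extra $q-2$'s in low positions looks value-maximizing but usually destroys minimality. I would record the gap structure: the $k = |S'|+1$ digits equal to $q-2$ split the cycle into $k$ maximal runs of $q-1$'s whose lengths sum to $m-k$. Reading downward from the leading position, the first such run has length at least $m-1-t$ (since the next $q-2$ sits at a position $\le t-1$), so the remaining $k-1 \ge 1$ runs have total length at most $(m-k)-(m-1-t) = t+1-k \le t-1$. A short computation with $t = \lfloor(m-1)/2\rfloor$ gives $m-1-t > t-1$, so some later run is strictly shorter than the leading one; rotating to begin at the corresponding $q-2$ produces a digit string agreeing with $a$ on the leading $q-2$ and the ensuing $q-1$'s, but showing a $q-2$ where $a$ still shows $q-1$, i.e. a strictly smaller value. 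This contradicts $a$ being a coset leader, so $S' = \emptyset$ and $a = \delta_1$; thus no coset leader lies strictly between $\delta_2$ and $\delta_1$.

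Finally I would confirm $\delta_2$ is itself a coset leader and compute $|C_{\delta_2}|$. Its two $q-2$ digits sit at positions $m-1$ and $t$, yielding gaps of lengths $m-2-t$ and $t$; checking $m-2-t \le t$ shows the arrangement starting at position $m-1$ is indeed lex-minimal, so $\delta_2$ is a coset leader and, being below $\delta_1$ with nothing between, is the second largest. The coset size is the number of distinct rotations: when $m$ is odd the two gaps are $s-1$ and $s$ (writing $m=2s+1$), so the necklace is aperiodic and $|C_{\delta_2}| = m$; when $m$ is even both gaps equal $s-1$ (writing $m=2s$), so the pattern has period $m/2$ and $|C_{\delta_2}| = m/2$, as claimed.
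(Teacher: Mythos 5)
Your proof is correct, and at bottom it runs on the same engine as the paper's: view residues modulo $n=q^m-1$ as base-$q$ digit strings of length $m$, observe that multiplication by $q$ rotates the string, and defeat every candidate leader in the interval $(\delta_2,\delta_1)$ by exhibiting a strictly smaller rotation. The difference is in organization and in how much is actually verified. The paper writes each candidate as $J_i=\delta_1-i$ with $1\le i\le q^{\lfloor (m-1)/2\rfloor}-1$, splits into four subcases ($m$ odd/even crossed with $q=2$/$q>2$), and in each case names the multiplier $q^{m-1-\ell}$ aligned with the top nonzero digit of $i$, leaving the decisive inequality $J_i q^{m-1-\ell}\bmod n<J_i$ as ``one can then verify''; it then proves that $\delta_2$ is a coset leader and computes $|C_{\delta_2}|$ by listing the coset $C_{\delta_2}$ explicitly. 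Your necklace formulation --- a leader's leading digit is its minimal digit, hence all digits lie in $\{q-2,q-1\}$ with $S'\subseteq\{0,\dots,t-1\}$, and the run-length count $m-1-t>t-1$ forces every later run of $(q-1)$'s to be strictly shorter than the leading one --- handles all four subcases uniformly and actually supplies the verification the paper omits: rotating to a $q-2$ followed by a strictly shorter run is exactly the paper's multiplication by $q^{m-1-\ell}$, and your lexicographic comparison is the missing inequality. Likewise your period argument (gaps $s-1,s$ for $m=2s+1$ give an aperiodic necklace and $|C_{\delta_2}|=m$; equal gaps $s-1,s-1$ for $m=2s$ give period $m/2$) recovers the coset sizes without the explicit listing, and it degenerates correctly at $m=2$, where the interval $(\delta_2,\delta_1)$ is empty. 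The one dependency worth stating explicitly is that, exactly like the paper, you stand on Lemma \ref{lem-1stcosetleader}: you need $\delta_1=(q-1)q^{m-1}-1$ to be the largest leader so that every leader you must exclude satisfies $a\le\delta_1$, which is what pins the leading digit to $q-2$.
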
 

\begin{proof}
The proof is divided into the following two cases according to the parity of $m$. 

\subsubsection*{Case I, i.e., $m$ is odd}

In this case, we have 
$$ 
\delta_2=(q-1)q^{m-1}-1-q^{(m-1)/2}=n-\left(q^{(m-1)/2}+1\right)q^{(m-1)/2}. 
$$
It is easily seen that 
$$ 
q \delta_2 \bmod{n} = n-(q^{(m+1)/2}+1). 
$$
One can then verify that 
\begin{eqnarray*}
C_{\delta_2} &=& \left\{n-\left(q^{(m+1)/2}+1\right)q^i: i=0, 1, \ldots, \frac{m-3}{2}\right\} \bigcup \\
                      &  & \left\{n-\left(q^{(m-1)/2}+1\right)q^i: i=0, 1, \ldots, \frac{m-1}{2}\right\}.  
\end{eqnarray*}
Therefore, $\delta_2$ is the smallest integer in $C_{\delta_2}$ and is thus the coset leader. Clearly, we 
have $|C_{\delta_2}|=m$. 

Let $t=(m-3)/2$, $1 \leq i \leq q^{(m-1)/2}-1$ and let $J_i=(q-1)q^{m-1}-1-i$. Notice that 
$$ 
q^{(m-1)/2}-1=(q-1)q^t+(q-1)q^{t-1}+ \ldots + (q-1)q+q-1. 
$$  
The $q$-adic expansion of $i$ must be of the form 
$$
i=i_{t}q^t + i_{t-1}q^{t-1} + \ldots + i_1q+i_0, 
$$
where each $i_j$ satisfies $0 \leq i_j \leq q-1$, but at least one of the $i_j$'s is nonzero. 
It then follows 
that the $q$-adic expansion of $J_i$ is given by 
\begin{eqnarray*}
J_i &=& (q-2)q^{m-1}+(q-1)q^{m-2}+(q-1)q^{m-3}+ \ldots + (q-1)q^{t+1} + \\
     & &   (q-1-i_t)q^t +(q-1-i_{t-1})q^{t-1} + \ldots + (q-1-i_1)q + q-1-i_0. 
\end{eqnarray*} 

\subsubsection*{Subcase I.1, i.e., $q=2$} 

In this subcase, we have 
\begin{eqnarray*}
J_i = 2^{m-2}+2^{m-3}+ \ldots + 2^{t+1} +  (1-i_t)2^t +(1-i_{t-1})2^{t-1} + \ldots + (1-i_1)2 + 1-i_0. 
\end{eqnarray*} 

If $i_0=1$, then $J_i/2$ and $J_i$ are in the same $2$-cyclotomic coset modulo $n$. 
Hence, $J_i$ cannot be a coset leader. 

We now assume that $i_0=0$. Since $i \neq 0$, one of the $i_\ell$'s must be nonzero. Let $\ell$ denote 
the largest one such that $i_\ell =1$. One can then verify that 
$$ 
J_i 2^{m-1-\ell} \bmod{n} < J_i. 
$$
Whence, $J_i$ cannot be a coset leader. 

\subsubsection*{Subcase I.2, i.e., $q>2$} 

If $i_\ell >1$ for some $\ell$ with $0 \leq \ell \leq t$, then $J_i q^{m-1-\ell} \bmod{n}<J_i$. In this case, 
$J_i$ cannot be a coset leader. 

We now assume that all $i_\ell \in \{0,1\}$. Since $i \geq 1$, at least one of the $i_\ell$'s must be 1. 
 Let $\ell$ denote 
the largest one such that $i_\ell =1$. One can then verify that 
$$ 
J_i q^{m-1-\ell} \bmod{n} < J_i. 
$$
Whence, $J_i$ cannot be a coset leader. 

Summarizing all the conclusions above, we conclude that $\delta_2$ is the second largest coset leader for the case 
that $m$ is odd. 

\subsubsection*{Case II, i.e.,  $m$ is even}

In this case, we have 
$$ 
\delta_2=(q-1)q^{m-1}-1-q^{(m-2)/2}=n-\left(q^{m/2}+1\right)q^{(m-2)/2}. 
$$
It is easily seen that 
\begin{eqnarray*}
C_{\delta_2} &=& \left\{n-\left(q^{m/2}+1\right)q^i: i=0, 1, \ldots, \frac{m-2}{2}\right\}.  
\end{eqnarray*}
Therefore, $\delta_2$ is the smallest integer in $C_{\delta_2}$ and is the coset leader. Obviously, 
$|C_{\delta_2}|=m/2$. 

When $m=2$, $\delta_2=\delta_1+1$, where $\delta_1$ was defined in Lemma \ref{lem-1stcosetleader}.  
There does not exist any coset leader between $\delta_1$ 
and $\delta_2$. Therefore, we now assume that $m \geq 4$. 

Let $t=(m-4)/2$, $1 \leq i \leq q^{(m-2)/2}-1$ and let $J_i=(q-1)q^{m-1}-1-i$. Notice that 
$$ 
q^{(m-2)/2}-1=(q-1)q^t+(q-1)q^{t-1}+ \ldots + (q-1)q+q-1. 
$$  
The $q$-adic expansion of $i$ must be of the form 
$$
i=i_{t}q^t + i_{t-1}q^{t-1} + \ldots + i_1q+i_0, 
$$
where each $i_j$ satisfies $0 \leq i_j \leq q-1$, but at least one of the $i_j$'s is nonzero. 
It then follows 
that the $q$-adic expansion of $J_i$ is given by 
\begin{eqnarray*}
J_i &=& (q-2)q^{m-1}+(q-1)q^{m-2}+(q-1)q^{m-3}+ \ldots + (q-1)q^{t+1} + \\
     & &   (q-1-i_t)q^t +(q-1-i_{t-1})q^{t-1} + \ldots + (q-1-i_1)q + q-1-i_0. 
\end{eqnarray*} 

\subsubsection*{Subcase II.1, i.e., $q=2$} 

In this subcase, we have 
\begin{eqnarray*}
J_i = 2^{m-2}+2^{m-3}+ \ldots + 2^{t+1} +  (1-i_t)2^t +(1-i_{t-1})2^{t-1} + \ldots + (1-i_1)2 + 1-i_0. 
\end{eqnarray*} 

If $i_0=1$, then $J_i/2 < J_i$. But $J_i/2$ and $J_i$ are in the same $2$-cyclotomic coset modulo $n$. 
Hence, $J_i$ cannot be a coset leader. 

We now assume that $i_0=0$. Since $i \neq 0$, one of the $i_\ell$'s must be nonzero. Let $\ell$ denote 
the largest one such that $i_\ell =1$. One can then verify that 
$$ 
J_i 2^{m-1-\ell} \bmod{n} < J_i. 
$$
Whence, $J_i$ cannot be a coset leader. 

\subsubsection*{Subcase II.2, i.e., $q>2$} 

If $i_\ell >1$ for some $\ell$ with $0 \leq \ell \leq t$, then $J_i q^{m-1-\ell} \bmod{n}<J_i$. In this case, 
$J_i$ cannot be a coset leader. 

We now assume that all $i_\ell \in \{0,1\}$. Since $i \geq 1$, at least one of the $i_\ell$'s must be 1. 
 Let $\ell$ denote 
the largest one such that $i_\ell =1$. One can then verify that 
$$ 
J_i q^{m-1-\ell} \bmod{n} < J_i. 
$$
Whence, $J_i$ cannot be a coset leader. 

Summarizing all the conclusions above, we deduce that $\delta_2$ is the second largest coset leader for the case 
that $m$ is even. 
\end{proof}

\begin{table}[ht]
\caption{Weight distribution of $\tilde{\C}_{(2,\, m,\, \delta_2)}$ for odd $m$.}\label{tab-CG1}
\centering
\begin{tabular}{ll}
\hline
Weight $w$    & No. of codewords $A_w$  \\ \hline
$0$                                                        & $1$ \\
$2^{m-1}-2^{(m-1)/2}$           & $(2^m-1)(2^{(m-1)/2}+1)2^{(m-3)/2}$ \\
$2^{m-1}$                             & $(2^m-1)(2^{m-1}+1)$ \\
$2^{m-1}+2^{(m-1)/2}$           & $(2^m-1)(2^{(m-1)/2}-1)2^{(m-3)/2}$ \\ \hline
\end{tabular}
\end{table}

\begin{table}[ht]
\caption{Weight distribution of $\tilde{\C}_{(2,\, m,\, \delta_2)}$ for even $m$.}\label{tab-CG2}
\centering
\begin{tabular}{ll}
\hline
Weight $w$    & No. of codewords $A_w$  \\ \hline
$0$                                                        & $1$ \\
$2^{m-1}-2^{(m-2)/2}$          & $(2^{m/2}-1)(2^{m-1}+2^{(m-2)/2})$ \\
$2^{m-1}$                               & $2^m-1$ \\
$2^{m-1}+2^{(m-2)/2}$          & $(2^{m/2}-1)(2^{m-1}-2^{(m-2)/2})$ \\ \hline
\end{tabular}
\end{table}

\begin{table}[ht]
\caption{Weight distribution of $\tilde{\C}_{(q,\, m,\, \delta_2)}$ for odd $m$.}\label{tab-CG3}
\centering
\begin{tabular}{ll}
\hline
Weight $w$    & No. of codewords $A_w$  \\ \hline
$0$                                                        & $1$ \\ 
$(q-1)q^{m-1}-q^{(m-1)/2}$           & $(q-1)(q^m-1)(q^{m-1}+q^{(m-1)/2})/2$ \\ 
$(q-1)q^{m-1}$                            & $(q^m-1)(q^{m-1}+1)$ \\ 
$(q-1)q^{m-1}+q^{(m-1)/2}$          & $(q-1)(q^m-1)(q^{m-1}-q^{(m-1)/2})/2$ \\ \hline
\end{tabular}
\end{table}

\begin{table}[ht]
\caption{Weight distribution of $\tilde{\C}_{(q,\, m,\, \delta_2)}$ for even $m$.}\label{tab-CG22}
\centering
\begin{tabular}{ll}
\hline
Weight $w$    & No. of codewords $A_w$  \\ \hline
$0$                                                        & $1$ \\ 
$(q-1)q^{m-1}-q^{(m-2)/2}$          & $(q-1)(q^{(3m-2)/2}-q^{(m-2)/2})$ \\ 
$(q-1)q^{m-1}$                               & $q^m-1$ \\ 
$(q-1)(q^{m-1}+q^{(m-2)/2})$          & $q^{(m-2)/2}(q^m-q^{(m+2)/2}+q-1)$ \\ \hline
\end{tabular}
\end{table}

\begin{theorem}\label{thm-bchccc1}
The code $\tilde{\C}_{(q, m, \delta_2)}$ has parameters $[n,\, \tilde{k},\, \tilde{d}]$, where $\tilde{d} \geq \delta_2+1$ 
and 
\begin{eqnarray}
\tilde{k}=\left\{ \begin{array}{ll}
2m  & \mbox{ for odd } m, \\
\frac{3m}{2} & \mbox{ for even } m.  
\end{array}
\right.
\end{eqnarray} 

When $q=2$ and $m$ is odd, $\tilde{d} = \delta_2+1$ and the weight distribution of the code is given in Table  \ref{tab-CG1}. 
When $q=2$ and $m$ is even, $\tilde{d} = \delta_2+1$ and the weight distribution of the code is given in Table \ref{tab-CG2}.

When $q$ is an odd prime, $\tilde{d} = \delta_2+1$ and $\tilde{\C}_{(q,\, m,\, \delta_2)}$ is a three-weight code with the weight distribution of Table \ref{tab-CG3} for odd $m$ and Table \ref{tab-CG22} for even $m$.  
\end{theorem}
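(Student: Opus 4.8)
The plan is to prove the dimension formula and the BCH lower bound by a coset-counting argument, and then to obtain the minimum distance (and the weight distributions) from a trace representation together with character-sum evaluations.

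\emph{Dimension and the bound on $\tilde d$.} Since $\dim\C_{(q,m,\delta_2)}=n-\deg g_{(q,m,\delta_2)}$, the dimension equals the number of nonzeros of the code, that is, the number of $i\in\{0,1,\ldots,n-1\}$ whose $q$-cyclotomic coset $C_i$ has leader $\geq\delta_2$, together with $i=0$ (which is excluded from the defining range $1\le j\le\delta_2-1$). By Lemma~\ref{lem-1stcosetleader} and Lemma~\ref{lem-2ndcosetleader} the only coset leaders $\geq\delta_2$ are $\delta_1$ and $\delta_2$, so the nonzeros of $\C_{(q,m,\delta_2)}$ form exactly $\{0\}\cup C_{\delta_1}\cup C_{\delta_2}$ and $\dim\C_{(q,m,\delta_2)}=1+|C_{\delta_1}|+|C_{\delta_2}|$. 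Passing to $\tilde\C_{(q,m,\delta_2)}$ adjoins $\alpha^0$ as a zero (the factor $x-1$), so $\tilde k=|C_{\delta_1}|+|C_{\delta_2}|$, which is $m+m=2m$ for odd $m$ and $m+\tfrac m2=\tfrac{3m}2$ for even $m$ by the coset-size formulas in those lemmas. Finally $\tilde d\geq\delta_2+1$ is immediate from the BCH bound, because $\tilde g_{(q,m,\delta_2)}$ has the $\delta_2$ consecutive roots $\alpha^0,\alpha^1,\ldots,\alpha^{\delta_2-1}$.

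\emph{Trace representation.} Put $x=\alpha^{-t}$, so that $x$ runs over $\gf(q^m)^*$ as $t$ runs over $\{0,\ldots,n-1\}$. Using $q^m\equiv1\pmod n$ one checks that $C_{\delta_1}=C_{-1}$ and that $C_{\delta_2}$ contains the representative $-(q^{(m-1)/2}+1)$ for odd $m$ and $-(q^{m/2}+1)$ for even $m$; write $e=q^{(m-1)/2}+1$ (odd $m$) or $e=q^{m/2}+1$ (even $m$) for the corresponding exponent. By Delsarte's theorem every codeword of $\tilde\C_{(q,m,\delta_2)}$ has the form $\mathbf c(a,b)=\big(\tr_{q^m/q}(ax)+\tr_{q^{|C_{\delta_2}|}/q}(b\,x^{e})\big)_{x}$ with $(a,b)\in\gf(q^m)\times\gf(q^{|C_{\delta_2}|})$, and this parametrization is a bijection by the dimension count above. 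Hence $\wt(\mathbf c(a,b))=n-Z(a,b)$ with $Z(a,b)=|\{x\in\gf(q^m)^*:\ \tr(ax)+\tr(b x^{e})=0\}|$, and expanding $Z$ through a nontrivial additive character $\psi$ of $\gf(q)$ gives $\wt(\mathbf c(a,b))=(q-1)q^{m-1}-\tfrac1q\sum_{y\in\gf(q)^*}\sum_{x\in\gf(q^m)}\psi\!\big(y\tr(ax)+y\tr(b x^{e})\big)$.

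\emph{Evaluating the sums and the minimum distance.} For odd $m$ one has $\gcd\big(\tfrac{m-1}2,m\big)=1$, and $x\mapsto\tr(b\,x^{q^{(m-1)/2}+1})$ is a Gold-type quadratic form on $\gf(q^m)\cong\gf(q)^m$. I would read off its rank and (for odd $q$) its type from the linearized polynomial $b^{q^{(m-1)/2}}y^{q^{m-1}}+b\,y$, whose $\gf(q^m)$-roots constitute the radical, count the $b$ giving each rank and type, evaluate the resulting quadratic character sums by the standard Gauss-sum formulas, and handle the linear term $y\tr(ax)$ by completing the square. This yields Table~\ref{tab-CG3} for odd prime $q$ and Table~\ref{tab-CG1} for $q=2$. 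For even $m$ the exponent $e=q^{m/2}+1$ turns $x^{e}=N_{\gf(q^m)/\gf(q^{m/2})}(x)$ into a relative norm, the inner sum collapses to a sum over $\gf(q^{m/2})$, and $\tilde\C_{(q,m,\delta_2)}$ reduces to a norm/irreducible-cyclic type code; this gives Table~\ref{tab-CG22} for odd prime $q$ and Table~\ref{tab-CG2} for $q=2$. In all four cases the smallest nonzero weight produced is $(q-1)q^{m-1}-q^{\lfloor(m-1)/2\rfloor}=\delta_2+1$, which together with the BCH lower bound forces $\tilde d=\delta_2+1$.

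I expect the main obstacle to be the exact rank-and-sign bookkeeping for the quadratic forms $Q_b(x)=\tr(b\,x^{e})$ in the odd-$m$ case: one must determine, as $b$ runs over $\gf(q^m)^*$, how many $b$ yield each admissible rank of $Q_b$, attach to each the correct quadratic-character sign (this is what decides whether a codeword has weight $(q-1)q^{m-1}+q^{(m-1)/2}$ or $(q-1)q^{m-1}-q^{(m-1)/2}$), and account for the cross term between $\tr(ax)$ and $Q_b$ after completing the square. The characteristic-two cases (Tables~\ref{tab-CG1} and \ref{tab-CG2}) must be treated separately from odd $q$, since the classification of quadratic forms and the associated Gauss and Weil sums take a different shape there; this is presumably why the weight distributions are stated only for prime $q$.
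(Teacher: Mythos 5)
Your proposal is correct in outline and reaches the theorem by a genuinely different route in its computational core. The dimension count, the BCH bound, and the Delsarte trace representation agree with the paper (the paper takes $h=\lfloor(m-1)/2\rfloor+1$ and exponent $1+q^{h}$, which is equivalent to your $e$ since $-(q^{(m-1)/2}+1)q^{(m+1)/2}\equiv-(1+q^{(m+1)/2})\pmod{n}$). Where you diverge is in how the weight distribution is obtained: you propose to classify the quadratic forms $Q_b(x)=\tr(bx^{e})$ by rank and type and to count the $b$ in each class directly, whereas the paper never counts $(a,b)$ pairs at all --- it evaluates the relevant Weil sums by quoting Coulter's explicit theorems \cite{Coulter}, obtains only the three possible nonzero weight values, and then extracts the frequencies by solving the first three Pless power moments, using that the dual code has minimum distance at least $3$ \cite[p.~259]{HP03}. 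That moment trick is exactly what lets one skip the ``rank-and-sign bookkeeping'' you flag as your main obstacle: you only need the list of weights, not a census of forms. Moreover, the bookkeeping you fear is lighter than you anticipate in the odd-$m$, odd-$q$ case: since $\gcd((m-1)/2,m)=1$ and $m$ is odd, the monomial $x^{q^{(m-1)/2}+1}$ is planar, so $Q_b$ has full rank $m$ for every $b\neq 0$, and only the sign and the completed-square constant $u$ (zero, square, or nonsquare) vary --- this is precisely how the paper's Cases 1.1--1.3 are organized. For even $m$ your subfield parametrization $b\in\gf(q^{m/2})$ with $x^{e}$ the relative norm is a clean bijective variant; the paper instead keeps both parameters in $\gf(q^m)$, which makes its parametrization $q^{m/2}$-to-one, a wrinkle it absorbs implicitly via the zero-weight codewords with $a^{q^{m/2}}+a=0$ in its Case 2.3 and the total $q^{3m/2}-1$ in its moment equations. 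Finally, for $q=2$ the paper does not re-derive Tables \ref{tab-CG1} and \ref{tab-CG2} but cites \cite{Goet79,Gold,Kasa69} (and \cite{YCD06} covers odd $q$, odd $m$); your characteristic-two quadratic-form derivation would work but is more than the paper needs. What your route buys is an explicit census of codewords per weight, including identification of the minimum-weight codewords (which the paper records separately in Theorem \ref{thm-minimumwt2}); what the paper's route buys is brevity and protection against sign and counting errors.
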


\begin{proof}
The conclusions on the dimension $\tilde{k}$ follow from Lemmas \ref{lem-1stcosetleader} and \ref{lem-2ndcosetleader}. 
By the BCH bound,  the minimum distance $\tilde{d} \geq \delta_2+1$. 

When $q=2$, $\tilde{d} = \delta_2+1$ and the weight distribution of $\tilde{\C}_{(q,\, m,\, \delta_2)}$ was determined in 
 \cite{Goet79,Gold,Kasa69}. 
 
We now treat the weight distribution of $\tilde{\C}_{(q,\, m,\, \delta_2)}$ for the case that $q$ is an odd prime. 
When $q$ is an odd prime and $m$ is odd, the weight distribution of $\tilde{\C}_{(q,\, m,\, \delta_2)}$ was settled 
in \cite{YCD06}. When $q$ is an odd prime and $m$ is even, the weight distribution of $\tilde{\C}_{(q,\, m,\, \delta_2)}$ 
is not documented in the literature, although the work in \cite{LK92} is related to and may be extended to settle 
the weight distribution of the code in this case. Hence, we will provide a proof of the weight distribution of 
$\tilde{\C}_{(q,\, m,\, \delta_2)}$ for both cases.  As will be seen below, our proof here can characterize all 
codewords of $\tilde{\C}_{(q,\, m,\, \delta_2)}$ with minimum weight. 
 
From now on in the proof, we assume that $q$ is odd prime.   Let $\eta'$ and $\eta$ denote the quadratic 
characters of $\gf(q^m)$ and $\gf(q)$, respectively. Let $\chi_1'$ and $\chi_1$ denote the canonical 
additive characters of $\gf(q^m)$ and $\gf(q)$, respectively. We will need the following results regarding 
Gauss sums \cite[Section 5.2]{LN}:  
\begin{eqnarray}\label{eqn-G1}
G(\eta, \chi_1)=\sum_{y \in \gf(q)^*} \eta(y) \chi_1(y)= \left\{ 
\begin{array}{ll}
\sqrt{q}   & \mbox{ if } q \equiv 1 \pmod{4} \\
\iota \sqrt{q}   & \mbox{ if } q \equiv 3 \pmod{4}
\end{array}
\right. 
\end{eqnarray}
where $\iota=\sqrt{-1}$ and  
\begin{eqnarray}\label{eqn-G2}
G(\eta, \chi_a)=\eta(a)G(\eta, \chi_1)
\end{eqnarray}
for all $a \in \gf(q)^*$, where $\chi_a(x)=\chi_1(ax)$ for all $x \in \gf(q)$.

It follows from the definition of $\tilde{\C}_{(q,\, m,\, \delta_2)}$ and Lemmas \ref{lem-1stcosetleader} and \ref{lem-2ndcosetleader} that the check polynomial of this code is $m_{\delta_1(x)}m_{\delta_2(x)}$. 
Notice that $\delta_1=n-q^{m-1}$ and 
$$ 
\delta_2=(q-1)q^{m-1}-1-q^{\lfloor (m-1)/2\rfloor} = n-(q^{m-1}+q^{\lfloor (m-1)/2\rfloor} ). 
$$

From Delsarte's Theorem \cite{Delsarte},  we then deduce that $\tilde{\C}_{(q,\, m,\, \delta_2)}$ is equivalent 
to  the following code (up to coordinate permutation) 
\begin{eqnarray}\label{eqn-codeCdelta2}
\tilde{\C}_{\delta_2}=\left\{\left(\tr\left(ax^{1+q^{\lfloor (m-1)/2\rfloor+1}} +bx\right)\right)_{x \in \gf(q^m)^*}: 
                           a \in \gf(q^m),\, b \in \gf(q^m)\right\},  
\end{eqnarray} 
herein and hereafter $\tr$ denotes the trace function from $\gf(q^m)$ to $\gf(q)$. In the definition of the code $\tilde{\C}_{\delta_2}$, 
we do not specify the order in which the elements of $\gf(q^m)$ are arranged when the codewords are defined, due to 
the fact that the codes resulted from different orderings of the elements of $\gf(q^m)^*$ are equivalent, and thus have 
the same weight distribution.

Define $h=\lfloor (m-1)/2\rfloor+1$ and let 
$$ 
f(x)=\tr\left(ax^{1+q^{h}} +bx\right) 
$$
where $a \in \gf(q^m)$ and $b \in \gf(q^m)$. 

We now consider the Hamming weight of the codeword 
$$ 
\bc_{(a,b)}=\left( f(x) \right)_{x \in \gf(q^m)^*}
$$
where $a \in \gf(q^m)$ and $b \in \gf(q^m)$.  It is straightforward to deduce that 
\begin{eqnarray}\label{eqn-wtform}
\wt(\bc_{(a,b)}) 
&=& (q-1)q^{m-1} - \frac{1}{q} \sum_{z \in \gf(q)^*} \sum_{x \in \gf(q^m)} \chi_1' (zf(x)) \nonumber \\
&=& (q-1)q^{m-1} - \frac{1}{q} \sum_{z \in \gf(q)^*} \sum_{x \in \gf(q^m)} \chi_1' (zax^{1+q^h} + zbx). 
\end{eqnarray}

We treat the weight distribution of $\tilde{\C}_{\delta_2}$ according to the parity of $m$ as follows. 

\subsection*{Case 1: $q$ is an odd prime and $m \geq 3$ is odd} 

In this case, we have the following basic facts that will be employed later: 
\begin{itemize}
\item[F1:] $h=(m+1)/2$. 
\item[F2:] $\gcd(h, m)=1$. 
\item[F3:] $\eta'(z)=\eta(z)$ for all $z \in \gf(q)^*$ (due to the fact that $(q^m-1)/(q-1)$ is odd). 
\item[F4:] $\chi_1'(x)=\chi_1(\tr(x))$ for all $x \in \gf(q^m)$. 
\item[F5:] $F(x):=a^{q^{h}}x^{q^{2h}}+ax=a^{q^{h}}x^{q}+ax$ is a permutation polynomial on $\gf(q^m)$ 
for each $a \in \gf(q^m)^*$, as $x^{1+q^h}$ is a planar monomial over $\gf(q^m)$. 
\end{itemize} 

\subsubsection*{Case 1.1: Let $a \neq 0$ and $b \neq 0$} 

Recall that $F(x)=a^{q^{h}}x^{q}+ax$ is a permutation polynomial over $\gf(q^m)$ for any $a \in \gf(q^m)^*$.  
Let $x_0$ be the unique solution of $F(x)=a^{q^{h}}x^{q}+ax=-b^{q^h}$ for any $a \in \gf(q^m)^*$ and 
$b \in \gf(q^m)$.  Put 
$$ 
u=\tr\left(ax_0^{1+q^h}\right). 
$$

In this subcase, it follows from Theorem 1 in \cite{Coulter} that 
\begin{eqnarray*}
\sum_{x \in \gf(q^m)} \chi_1' (af(x)) 
&=& \left\{ 
\begin{array}{ll}
q^{m/2} \overline{\eta'(-za) \chi_1'(za x_0^{1+q^h})},  & \mbox{ if } q \equiv 1 \pmod{4} \\
\iota^{3m} q^{m/2} \overline{\eta'(-za) \chi_1'(za x_0^{1+q^h})},  & \mbox{ if } q \equiv 3 \pmod{4}
\end{array}
\right. 
\\ 
&=& \left\{ 
\begin{array}{ll}
q^{m/2}\eta'(-a) \overline{\eta(z) \chi_1(zu)},  & \mbox{ if } q \equiv 1 \pmod{4}, \\
\iota^{3m} \eta'(-a) q^{m/2} \overline{\eta(z) \chi_1(zu)},  & \mbox{ if } q \equiv 3 \pmod{4}, 
\end{array}
\right. 
\end{eqnarray*}
where $\overline{\ell}$ denotes the complex conjugate of the complex number $\ell$.   

When $u=0$, we have 
\begin{eqnarray*}
\sum_{z \in \gf(q)^*} \sum_{x \in \gf(q^m)} \chi_1' (af(x)) 
&=& \left\{ 
\begin{array}{ll}
\sum_{z \in \gf(q)^*} q^{m/2}\eta'(-a) \eta(z),  & \mbox{ if } q \equiv 1 \pmod{4} \\
\sum_{z \in \gf(q)^*} \iota^{3m} \eta'(-a) q^{m/2} \eta(z),  & \mbox{ if } q \equiv 3 \pmod{4} 
\end{array}
\right. 
\\ 
&=& 0. 
\end{eqnarray*}
Consequently, $\wt(\bc_{(a,b)})=(q-1)q^{m-1}$. 

When $u \neq 0$, we have 
\begin{eqnarray*}
\sum_{z \in \gf(q)^*} \sum_{x \in \gf(q^m)} \chi_1' (af(x)) 
&=& \left\{ 
\begin{array}{ll}
 q^{m/2}\eta'(-a) \overline{G(\eta, \chi_u)},  & \mbox{ if } q \equiv 1 \pmod{4} \\
\iota^{3m} \eta'(-a) q^{m/2}  \overline{G(\eta, \chi_u)},  & \mbox{ if } q \equiv 3 \pmod{4} 
\end{array}
\right. 
\\ 
&=& \left\{ 
\begin{array}{ll}
 q^{(m+1)/2}\eta'(-a) \eta(u),  & \mbox{ if } q \equiv 1 \pmod{4}, \\
 q^{(m+1)/2}\iota^{3m+1} \eta'(-a) \eta(u),  & \mbox{ if } q \equiv 3 \pmod{4}.  
\end{array}
\right. 
\end{eqnarray*}
Consequently, $\wt(\bc_{(a,b)})=(q-1)q^{m-1} \pm q^{(m-1)/2}$. 

\subsubsection*{Case 1.2: Let $a \neq 0$ and $b =0$} 

In this case, it follows from Theorem 2.3 in \cite{Coulter} that 
\begin{eqnarray*}
\sum_{x \in \gf(q^m)} \chi_1' (af(x)) 
&=& \sum_{x \in \gf(q^m)} \chi_1' (azx^{1+q^h}) \\  
&=& \left\{ 
\begin{array}{ll}
q^{m/2} \eta'(za),  & \mbox{ if } q \equiv 1 \pmod{4} \\
\iota^{m} q^{m/2} \eta'(za),  & \mbox{ if } q \equiv 3 \pmod{4}
\end{array}
\right. 
\\ 
&=& \left\{ 
\begin{array}{ll}
q^{m/2} \eta'(a) \eta(z),  & \mbox{ if } q \equiv 1 \pmod{4}, \\
\iota^{m} q^{m/2} \eta'(a) \eta(z),  & \mbox{ if } q \equiv 3 \pmod{4}. 
\end{array}
\right. 
\end{eqnarray*}
As a result, we obtain 
\begin{eqnarray*}
\sum_{z \in \gf(q)^*} \sum_{x \in \gf(q^m)} \chi_1' (af(x)) =0. 
\end{eqnarray*} 
Hence, $\wt(\bc_{(a,b)})=(q-1)q^{m-1}$. 

\subsubsection*{Case 1.3: Let $a=0$ and $b \neq 0$} 

In this case, $f(x)=\tr(bx)$. Obviously, $\wt(\bc_{(0,b)})=(q-1)q^{m-1}$. 

Summarizing the conclusions of Cases 1.1, 1.2, and 1.3, we see that the code $\tilde{\C}_{\delta_2}$ has the 
following three nonzero weights: 
\begin{eqnarray*}
w_1 &=& (q-1)q^{m-1} - q^{(m-1)/2}, \\
w_2 &=& (q-1)q^{m-1}, \\
w_3 &=& (q-1)q^{m-1} + q^{(m-1)/2}. 
\end{eqnarray*}
Let $A_{w_i}$ be the total number of codewords with Hamming weight $w_i$ in $\tilde{\C}_{\delta_2}$. 
It is straightforward to see that the minimum distance of the dual of $\tilde{\C}_{\delta_2}$ is at least $3$. 
Then the first three Pless power moments yield the following set of equations \cite[p. 259]{HP03}: 
\begin{equation}
\left\{ 
\begin{array}{l}
A_{w_1}+A_{w_2}+A_{w_3}=q^{2m}-1, \\
w_1A_{w_1}+w_2A_{w_2}+w_3A_{w_3}=q^{2m-1}(q-1)(q^m-1), \\
w_1^2A_{w_1}+w_2^2A_{w_2}+w_3^2A_{w_3}=q^{2m-2}(q -1) (q^{m+1}-q^m-q+2). 
\end{array}
\right. 
\end{equation} 
Solving this system of equations gives the $A_{w_i}$'s in Table \ref{tab-CG3}. 

\subsection*{Case 2: $q$ is an odd prime and $m \geq 4$ is even} 

In this case, we have the following basic facts that will be used subsequently: 
\begin{itemize}
\item[H1:] $h=\lfloor (m-1)/2 \rfloor +1 =m/2$. 
\item[H2:] $\gcd(h, m)=h=m/2$. 
\item[H3:] $\eta'(z)=1$ for all $z \in \gf(q)^*$ (due to the fact that $(q^m-1)/(q-1)$ is even). 
\item[H4:] $\chi_1'(x)=\chi_1(\tr(x))$ for all $x \in \gf(q^m)$. 
\item[H5:] The equation $y^{q^h}+y=0$ has $q^h$ solutions $y$ in $\gf(q^m)$ (it follows from 
                  Lemma 2.2 in \cite{Coulter}). 
\item[H6:] $F(x):=a^{q^{h}}x^{q^{2h}}+ax=(a^{q^{h}}+a)x$ is a permutation polynomial on $\gf(q^m)$ 
for $q^m-q^{m/2}-1$ nonzero $a \in \gf(q^m)$, and is not a permutation polynomial on $\gf(q^m)$ for 
$q^{m/2}-1$ nonzero elements $a \in \gf(q^m)$. 
\end{itemize} 

We now consider the Hamming weight $\wt(\bc_{(a,b)})$ of the codeword $\bc_{(a,b)}$ case by case for Case 2. 

\subsubsection*{Case 2.1: Let $a \neq 0$,  $a^{q^h}+a \neq 0$ and $b \neq 0$} 

In this subcase, let $x_0=-b^{q^h}/(a^{q^h}+a)$. Then we have 
$$ 
ax_0^{1+q^h}=\frac{ab^{1+q^h}}{(a^{q^h}+a)^2}. 
$$
Put 
$$
u=\tr\left(ax_0^{1+q^h}\right)=\tr\left( \frac{ab^{1+q^h}}{(a^{q^h}+a)^2} \right). 
$$
It then follows from Theorem 1 in \cite{Coulter} that 
\begin{eqnarray*}
\sum_{x \in \gf(q^m)} \chi_1' (af(x)) 
= - q^{m/2} \overline{\chi_1' \left(azx_0^{1+q^h} \right)} 
= - q^{m/2} \overline{\chi_1 \left(zu \right)}.  
\end{eqnarray*}
We obtain then 
\begin{eqnarray*}
\sum_{z \in \gf(q)^*} \sum_{x \in \gf(q^m)} \chi_1' (af(x)) 
= \left\{ 
\begin{array}{ll}
-(q-1)q^{m/2},  & \mbox{ if } u =0, \\
q^{m/2},  & \mbox{ if } u  \neq 0. 
\end{array}
\right. 
\end{eqnarray*} 
Consequently, 
\begin{eqnarray*}
\wt(\bc_{(a,b)})
= \left\{ 
\begin{array}{ll}
(q-1)\left(q^{m-1}+q^{(m-2)/2}\right),  & \mbox{ if } u =0, \\
(q-1)q^{m-1}-q^{(m-2)/2},  & \mbox{ if } u  \neq 0. 
\end{array}
\right. 
\end{eqnarray*} 

\subsubsection*{Case 2.2: Let $a \neq 0$,  $a^{q^h}+a = 0$ and $b \neq 0$} 

In this subcase $F(x)=-b^{q^h}$ has no solution $x \in \gf(q^m)$. It then follows from Theorem 2 in \cite{Coulter} 
that 
$$ 
\sum_{x \in \gf(q^m)} \chi_1' (af(x)) =0. 
$$ 
As a result, we have 
$$
\sum_{z \in \gf(q)^* } \sum_{x \in \gf(q^m)} \chi_1' (af(x))=0.  
$$
It then follows from (\ref{eqn-wtform}) that $\wt(\bc_{(a,b)})=(q-1)q^{m-1}$. 

\subsubsection*{Case 2.3: Let $a \neq 0$ and $b = 0$}  

In this case, it follows from Theorem 2.4 in \cite{Coulter} that 
\begin{eqnarray*}
\sum_{x \in \gf(q^m)} \chi_1' (af(x)) =
\left\{ 
\begin{array}{ll}
-q^{m/2}  & \mbox{ if } a^{q^h}+a \neq 0, \\
q^{m}  & \mbox{ if } a^{q^h}+a = 0. 
\end{array}
\right. 
\end{eqnarray*} 
Hence, 
\begin{eqnarray*}
\sum_{z \in \gf(q)^* }  \sum_{x \in \gf(q^m)} \chi_1' (af(x))= 
\left\{ 
\begin{array}{ll}
-(q-1)q^{m/2}  & \mbox{ if } a^{q^h}+a \neq 0, \\
(q-1)q^{m}  & \mbox{ if } a^{q^h}+a = 0. 
\end{array}
\right. 
\end{eqnarray*}
We then deduce that 
\begin{eqnarray*}
\wt(\bc_{(a,b)})= 
\left\{ 
\begin{array}{ll}
(q-1)\left(q^{m-1} + q^{m/2} \right)  & \mbox{ if } a^{q^h}+a \neq 0, \\
0  & \mbox{ if } a^{q^h}+a = 0. 
\end{array}
\right. 
\end{eqnarray*}

\subsubsection*{Case 2.4: Let $a = 0$ and $b \neq 0$}  

In this subcase, we have $f(x)=\tr(bx)$. Obviously, $\wt(\bc_{(a,b)})= (q-1)q^{m-1}$. 

Summarizing the conclusions of Cases 2.1, 2.2, 2.3, and 2.4, we conclude that  $\tilde{\C}_{\delta_2}$ has the 
following three nonzero weights: 
\begin{eqnarray*}
w_1 &=& (q-1)q^{m-1} - q^{(m-2)/2}, \\
w_2 &=& (q-1)q^{m-1}, \\
w_3 &=& (q-1)\left(q^{m-1} + q^{(m-2)/2} \right). 
\end{eqnarray*}
Let $A_{w_i}$ be the total number of codewords with Hamming weight $w_i$ in $\tilde{\C}_{\delta_2}$. 
It is straightforward to see that the minimum distance of the dual of $\tilde{\C}_{\delta_2}$ is at least $3$. 
Then the first three Pless power moments yield the following set of equations: 
\begin{equation}
\left\{ 
\begin{array}{l}
A_{w_1}+A_{w_2}+A_{w_3}=q^{3m/2}-1, \\
w_1A_{w_1}+w_2A_{w_2}+w_3A_{w_3}=q^{(3m-2)/2}(q-1)(q^m-1), \\
w_1^2A_{w_1}+w_2^2A_{w_2}+w_3^2A_{w_3}=q^{(3m-4)/2}(q -1)(q^m-1) (q^{m+1}-q^m-q+2). 
\end{array}
\right. 
\end{equation} 
Solving this system of equations gives the $A_{w_i}$'s in Table \ref{tab-CG22}.  This completes the proof of 
this theorem. 
\end{proof} 

We remark that the proof of Theorem \ref{thm-bchccc1} given above may be modified to one for the same 
conclusions that hold for the case that $q$ is an odd prime power. To this end, the results in \cite{Coulter} 
should be generalized first. 

On the other hand, the proof of Theorem \ref{thm-bchccc1} actually characterizes all 
the codewords in $\tilde{\C}_{\delta_2}$ with the minimum Hamming weight. This characterisation is given 
in the next theorem and may be employed to derive $t$-designs with the code $\tilde{\C}_{\delta_2}$ (see 
\cite{AK92} for information about  $t$-designs associated with linear codes). 

\begin{theorem}\label{thm-minimumwt2}
Let $q$ be an odd prime, and let $m \geq 4$. When $m$ is odd, all the codewords of $\tilde{\C}_{\delta_2}$ 
with minimum weight $\tilde{d}=\delta_2+1$ are those $\bc_{(a,b)}$ such that 
\begin{itemize}
\item $ab \neq 0$ and $\eta'(-a)\eta\left(\tr\left(a x_0^{1+q^{(m+1)/2}}\right) \right)=1$ if $q \equiv 1 \pmod{4}$, where $x_0$ is 
          the unique solution of $a^{q^{(m+1)/2}}x^{q}+ax+b^{q^{(m+1)/2}}=0$; and 
\item $ab \neq 0$ and $\iota^{3m+1}\eta'(-a)\eta\left(\tr\left(a x_0^{1+q^{(m+1)/2}}\right) \right)=1$ if $q \equiv 3 \pmod{4}$, where $x_0$ is 
          the unique solution of $a^{q^{(m+1)/2}}x^{q}+ax+b^{q^{(m+1)/2}}=0$.            
\end{itemize}
When $m$ is even, all the codewords of $\tilde{\C}_{\delta_2}$ 
with minimum weight $\tilde{d}=\delta_2+1$ are those $\bc_{(a,b)}$ such that $a^{q^{m/2}}+a \neq 0$, $b \neq 0$ 
and 
$$ 
\tr\left( \frac{ab^{1+q^{m/2}}}{(a^{q^{m/2}}+a)^2} \right) \neq 0. 
$$
\end{theorem}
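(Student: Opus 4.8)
The plan is to obtain Theorem~\ref{thm-minimumwt2} as an immediate refinement of the case analysis already carried out in the proof of Theorem~\ref{thm-bchccc1}, so that essentially no new machinery is needed. The starting observation is that in both parities the minimum distance coincides with the smallest nonzero weight $w_1$ produced there: for odd $m$ one has $\delta_2+1=(q-1)q^{m-1}-q^{(m-1)/2}=w_1$, and for even $m$ one has $\delta_2+1=(q-1)q^{m-1}-q^{(m-2)/2}=w_1$. Consequently a codeword $\bc_{(a,b)}$ has minimum weight if and only if $\wt(\bc_{(a,b)})=w_1$, and the entire task reduces to revisiting the subcases of that proof and recording exactly which pairs $(a,b)$ realize this value.

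For odd $m$ I would first eliminate the irrelevant subcases: Cases~1.2 and~1.3 and the zero codeword all give weights in $\{0,w_2\}$, none of which equals $w_1$. Hence every minimum-weight codeword must come from Case~1.1, which already forces $ab\neq 0$. Within Case~1.1 the weight equals $w_2=(q-1)q^{m-1}$ when $u=0$ and equals $(q-1)q^{m-1}-q^{(m-1)/2}S$ when $u\neq 0$, where $u=\tr(ax_0^{1+q^{(m+1)/2}})$, $x_0$ is the unique root of $a^{q^{(m+1)/2}}x^{q}+ax+b^{q^{(m+1)/2}}=0$, and $S=\eta'(-a)\eta(u)$ for $q\equiv 1\pmod 4$ while $S=\iota^{3m+1}\eta'(-a)\eta(u)$ for $q\equiv 3\pmod 4$. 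Since $u\neq 0$ forces $S\in\{+1,-1\}$, the value $w_1$ is attained precisely when $S=+1$; this is exactly the sign condition stated in the two bullets, and I would also note in passing that $\iota^{3m+1}$ is real because $3m+1$ is even for odd $m$.

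For even $m$ the argument runs in parallel. Cases~2.2, 2.3, and~2.4 yield weights lying in $\{0,w_2,w_3\}$, so none of them can be minimum, and every minimum-weight codeword must arise from Case~2.1, which already requires $a^{q^{m/2}}+a\neq 0$ (hence $a\neq 0$) and $b\neq 0$. In that subcase the weight equals $w_3=(q-1)(q^{m-1}+q^{(m-2)/2})$ when $u=0$ and equals $w_1$ when $u\neq 0$, where $u=\tr(ab^{1+q^{m/2}}/(a^{q^{m/2}}+a)^2)$. Reading off the condition $u\neq 0$ then gives precisely the stated trace inequality, which completes the characterization.

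The proof is thus mostly bookkeeping, and I do not anticipate a serious obstacle. The one delicate point is the sign tracking in the odd case: to land on $w_1$ rather than $w_3$ one must verify that the factor $S$ multiplying $q^{(m-1)/2}$ in~(\ref{eqn-wtform}) is exactly $+1$, which requires carrying the signs faithfully through the Gauss-sum evaluations~(\ref{eqn-G1})--(\ref{eqn-G2}) and through the $q\equiv 3\pmod 4$ normalization factor $\iota^{3m+1}$. Once these signs are pinned down, the characterization follows by direct inspection of the subcases.
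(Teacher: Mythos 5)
Your proposal is correct and is exactly the paper's route: the paper gives no separate proof of Theorem~\ref{thm-minimumwt2}, but states that the case analysis in the proof of Theorem~\ref{thm-bchccc1} already characterizes the minimum-weight codewords, and your reading-off of the subcases (Case~1.1 with the sign $S=+1$ for odd $m$, Case~2.1 with $u\neq 0$ for even $m$, all other subcases yielding weights in $\{0,w_2,w_3\}$) matches that derivation, including the observation that $\iota^{3m+1}=\pm 1$ since $3m+1$ is even for odd $m$.
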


\begin{table}[ht]
\begin{center} 
\caption{Examples of  $\tilde{\C}_{(q,\, m,\, \delta_2)}$ of Theorem \ref{thm-bchccc1} }\label{tab-bchthm1}
\begin{tabular}{rrrrrl} \hline
$n$ &   $k$   & $d=\delta_2+1$  & $m$ & $q$  & Optimality  \\ \hline  
 $15$ & $6$   & $6$                           & $4$  & $1$ & Optimal                    \\ 
$31$ & $10$   & $12$                           & $5$  & $2$ &  Optimal                  \\ 
$63$ & $9$   & $28$                           & $6$  & $2$ &  Optimal                  \\ 
$127$ & $14$   & $56$                           & $7$  & $2$ & Optimal                     \\ 
$256$ & $12$   & $120$                           & $8$  & $2$ & Best known                    \\  
$26$ & $6$   & $15$                           & $3$  & $3$ &    Optimal                \\ 
$80$ & $6$   & $51$                           & $4$  & $3$ &  Optimal                   \\ 
$242$ & $10$   & $153$                           & $5$  & $3$ & Best known                    \\ \hline
\end{tabular}
\end{center} 
\end{table} 

Examples of the code $\tilde{\C}_{(q,\, m,\, \delta_2)}$ are summarized in Table \ref{tab-bchthm1}. They 
either are optimal,  or have the same parameters as the best linear codes in the 
Database.

The following theorem is proved in \cite{Gold,Kasa69}. 

\begin{theorem}\label{thm-dualdis1181}
The minimum distance $\tilde{d}^\perp$ of the dual of $\tilde{\C}_{(2, m, \delta_2)}$ is equal to $5$ when $m \geq 5$ 
is odd, and $3$ when $m \geq 4$ is even. 
\end{theorem}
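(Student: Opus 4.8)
The plan is to pass to the trace/Delsarte description already set up in the proof of Theorem~\ref{thm-bchccc1} and to reduce the computation of $\tilde d^\perp$ to a purely additive problem over $\gf(2^m)$. Write $h=\lfloor (m-1)/2\rfloor+1$ and $N(x)=x^{1+2^{h}}$. The same nondegeneracy-of-trace argument that produced (\ref{eqn-codeCdelta2}) shows that a binary vector $(v_x)_{x\in\gf(2^m)^*}$ lies in $\tilde{\C}_{(2,m,\delta_2)}^{\perp}$ if and only if it annihilates every $\bc_{(a,b)}$, i.e. $\sum_{x} v_x x=0$ and $\sum_{x} v_x x^{1+2^{h}}=0$ in $\gf(2^m)$. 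Hence $\tilde d^\perp$ equals the smallest size of a nonempty set $S\subseteq\gf(2^m)^*$ with $\sum_{x\in S}x=0$ and $\sum_{x\in S}N(x)=0$. No such $S$ has size $1$ or $2$, since in characteristic $2$ the first equation would force two of the distinct elements to coincide. Throughout I would use the polarization $B(x,y)=xy^{2^{h}}+x^{2^{h}}y$ of $N$, for which $N(x+y)=N(x)+N(y)+B(x,y)$ and $B(x,x)=0$.

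For $|S|=3$, write $S=\{x_1,x_2,x_3\}$ with $x_3=x_1+x_2$; expanding $N(x_3)$ collapses the second equation to $B(x_1,x_2)=0$, i.e. $(x_1/x_2)^{2^{h}-1}=1$. The number of admissible ratios $x_1/x_2$ is $\gcd(2^{h}-1,2^m-1)=2^{\gcd(h,m)}-1$. When $m$ is even, $\gcd(h,m)=m/2\ge 2$, so there are $2^{m/2}-1>1$ such ratios and a genuine weight-$3$ word exists; combined with the absence of weights $1$ and $2$ this gives $\tilde d^\perp=3$ for even $m\ge 4$. When $m$ is odd, $\gcd(h,m)=\gcd((m+1)/2,m)=1$, so the only ratio is $1$, forcing $x_1=x_2$; hence there is no weight-$3$ word, and it remains to treat odd $m$.

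For $|S|=4$ with $m$ odd, set $x_4=x_1+x_2+x_3$; expanding $N(x_4)$ via $B$ reduces the second equation to $B(x_1,x_2)+B(x_1,x_3)+B(x_2,x_3)=0$. Putting $a=x_2+x_3\neq 0$ and using bilinearity together with $B(x_2,x_2)=0$, this rewrites as $B(x_1+x_2,\,a)=0$, so $x_1+x_2$ lies in the kernel of the linear map $y\mapsto B(a,y)$. Since $\gcd(h,m)=1$ and $m$ is odd, $N$ is a Gold function and is APN; concretely $B(a,y)=0\iff(y/a)^{2^{h}-1}=1$, and $\gcd(2^{h}-1,2^m-1)=1$ forces this kernel to be exactly $\{0,a\}$. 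Thus $x_1+x_2\in\{0,\,x_2+x_3\}$, giving $x_1=x_2$ or $x_1=x_3$, each contradicting distinctness. Hence no weight-$4$ word exists and $\tilde d^\perp\ge 5$. This kernel/APN step is the conceptual heart of the lower bound.

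Finally, to pin $\tilde d^\perp=5$ for odd $m\ge 5$ I would invoke the three-weight distribution of $\tilde{\C}_{(2,m,\delta_2)}$ recorded in Table~\ref{tab-CG1}: feeding it through the MacWilliams transform (equivalently, the first few Pless power moments) determines the dual weight distribution, and one checks that the resulting $A_5^{\perp}$ is strictly positive for all odd $m\ge 5$. Since $A_1^{\perp}=A_2^{\perp}=A_3^{\perp}=A_4^{\perp}=0$ by the preceding paragraphs, this yields $\tilde d^\perp=5$ and completes the odd case. I expect this last existence step to be the main obstacle: the structural arguments above give the clean lower bound $\tilde d^\perp\ge 5$, but exhibiting a weight-$5$ word requires the quantitative input of the known weight enumerator (or an explicit construction of five elements of $\gf(2^m)^*$ summing to $0$ with vanishing $N$-sum), which does not follow from the kernel argument alone.
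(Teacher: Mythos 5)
Your proof is correct, but it is genuinely different from what the paper does: the paper offers no proof of Theorem~\ref{thm-dualdis1181} at all, simply citing the classical results of Gold and Kasami, where the dual distance is extracted from the known weight distributions of these codes. Your argument is a self-contained structural proof. The reduction via Delsarte/nondegeneracy of the trace to the two conditions $\sum_{x\in S}x=0$ and $\sum_{x\in S}x^{1+2^{h}}=0$ is sound, and the polarization computation is right: $B(x_1+x_2,x_2+x_3)=B(x_1,x_2)+B(x_1,x_3)+B(x_2,x_3)$ since $B(x_2,x_2)=0$, and the kernel of $y\mapsto B(a,y)$ has size $2^{\gcd(h,m)}$, which is $\{0,a\}$ exactly when $\gcd(h,m)=1$, i.e., for odd $m$ (note $2h=m+1\equiv 1\pmod m$). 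This correctly rules out weights $3$ and $4$ for odd $m$, produces weight-$3$ words for even $m\geq 4$ (where $2^{m/2}-1>1$ ratios exist), and in passing explains why $m=3$ is excluded from the odd case. Your final step is also legitimate: since $A_1^{\perp}=\cdots=A_4^{\perp}=0$ and the code has only the three nonzero weights of Table~\ref{tab-CG1}, the MacWilliams transform determines the dual weight distribution completely, and $A_5^{\perp}>0$ for odd $m\geq 5$ (e.g., for $m=5$ the dual is the $[31,21,5]$ double-error-correcting BCH code with $A_5^{\perp}=186$); this quantitative input is precisely what the paper outsources to \cite{Gold,Kasa69}. What your route buys over the citation is transparency — the even/odd dichotomy is traced directly to $\gcd(h,m)$ and the APN property of the Gold monomial, and your weight-3 analysis even characterizes the minimum-weight dual codewords for even $m$ — at the cost of still needing the weight enumerator (or an equivalent fifth-moment count of Walsh values) for the existence half of the odd case.
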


\begin{theorem}\label{thm-bchccc2}
The code $\C_{(q,\, m,\, \delta_2)}$ has papameters $[n,\, k,\, d]$, where $d \geq \delta_2$ and 
\begin{eqnarray}
k=\left\{ \begin{array}{ll}
2m+1  & \mbox{ for odd } m, \\
\frac{3m}{2} +1 & \mbox{ for even } m.  
\end{array}
\right.
\end{eqnarray}
Furthermore, $d = \delta_2$ if $q$ is a prime.  
\end{theorem}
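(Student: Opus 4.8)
The plan is to derive both the dimension and the minimum distance from Theorem~\ref{thm-bchccc1} together with a single explicit weight computation. By construction $\tilde{\C}_{(q,m,\delta_2)}$ is the subcode of $\C_{(q,m,\delta_2)}$ obtained by adjoining the root $\alpha^0=1$, so $\dim(\C_{(q,m,\delta_2)})=\dim(\tilde{\C}_{(q,m,\delta_2)})+1$; the values of $\dim(\tilde{\C}_{(q,m,\delta_2)})$ recorded in Theorem~\ref{thm-bchccc1} then give $k=2m+1$ for odd $m$ and $k=\tfrac{3m}{2}+1$ for even $m$. Since $\delta_2$ is a coset leader by Lemma~\ref{lem-2ndcosetleader}, the BCH bound yields $d\ge\delta_2$, and it remains only to prove $d\le\delta_2$ when $q$ is a prime.

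Because the two largest coset leaders are $\delta_1$ and $\delta_2$ (Lemmas~\ref{lem-1stcosetleader} and \ref{lem-2ndcosetleader}), the check polynomial of $\C_{(q,m,\delta_2)}$ is $(x-1)m_{\delta_1}(x)m_{\delta_2}(x)$, differing from that of $\tilde{\C}_{(q,m,\delta_2)}$ only by the factor $(x-1)$ coming from $C_0=\{0\}$. By Delsarte's theorem, exactly as in \eqref{eqn-codeCdelta2}, $\C_{(q,m,\delta_2)}$ is permutation-equivalent to
$$\C_{\delta_2}=\left\{\left(\tr\left(ax^{1+q^{h}}+bx\right)+c\right)_{x\in\gf(q^m)^*}: a,b\in\gf(q^m),\ c\in\gf(q)\right\},$$
the constant $c$ being the contribution of the monomial $x^0$; equivalently $\C_{\delta_2}=\tilde{\C}_{\delta_2}\oplus\gf(q)\mathbf{1}$. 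Thus it suffices to produce one triple $(a,b,c)$ with $c\ne0$ for which $\wt(\bc_{(a,b,c)})=\delta_2$.

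For $c\ne0$ the point $x=0$ is never a zero of $f(x)+c$, and the character-sum manipulation that produced \eqref{eqn-wtform} now gives
$$\wt(\bc_{(a,b,c)})=(q-1)q^{m-1}-1-\frac1q\sum_{z\in\gf(q)^*}\chi_1(zc)\,S(z),\qquad S(z)=\sum_{x\in\gf(q^m)}\chi_1'\!\left(z(ax^{1+q^{h}}+bx)\right).$$
I would reuse the evaluations of $S(z)$ from the proof of Theorem~\ref{thm-bchccc1} (via the results of \cite{Coulter}), the only new feature being the twist $\chi_1(zc)$. For odd $m$ with $ab\ne0$ and $F(x)=a^{q^{h}}x^{q}+ax$ a permutation, writing $u=\tr(ax_0^{1+q^{h}})$ for the solution $x_0$ of $F(x)=-b^{q^h}$, the twist converts the inner sum into the Gauss sum $G(\eta,\chi_{c-u})=\eta(c-u)G(\eta,\chi_1)$ of \eqref{eqn-G2}, so that
$$\frac1q\sum_{z\in\gf(q)^*}\chi_1(zc)\,S(z)=q^{(m-1)/2}\,\varepsilon\,\eta'(-a)\,\eta(c-u)$$
for a fixed unit $\varepsilon=\pm1$ determined by $m$ and $q\bmod4$; for even $m$ one uses Case~2.1 instead, where the inner sum collapses to $-1$ and the factor is simply $q^{(m-2)/2}$. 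Choosing $a,b,c$ so that $c\ne0$, $c\ne u$, the relevant non-degeneracy holds (resp.\ $a^{q^{h}}+a\ne0$ for even $m$), and $\varepsilon\,\eta'(-a)\,\eta(c-u)=1$, the displayed quantity equals $q^{\lfloor(m-1)/2\rfloor}$, whence $\wt(\bc_{(a,b,c)})=\delta_2$ and $d=\delta_2$. For $q=2$ the argument is shorter: $\mathbf{1}\in\C_{(2,m,\delta_2)}\setminus\tilde{\C}_{(2,m,\delta_2)}$, so complementing a maximum-weight codeword of $\tilde{\C}_{(2,m,\delta_2)}$ (of weight $2^{m-1}+2^{\lfloor(m-1)/2\rfloor}$ by Tables~\ref{tab-CG1}--\ref{tab-CG2}) yields a codeword of weight $n-(2^{m-1}+2^{\lfloor(m-1)/2\rfloor})=\delta_2$.

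The genuinely delicate point is the existence claim in the odd-$q$ case: the factors $\eta'(-a)$, $\eta(c-u)$ and the constraint $c\in\gf(q)^*\setminus\{u\}$ are coupled, since changing $a$ also changes $u$, so one must verify that $\varepsilon\,\eta'(-a)\,\eta(c-u)=+1$ can be realized while keeping $c\ne0$. For $q\ge5$ there is ample slack; the only case needing care is $q=3$, where $\gf(q)^*$ has just two elements, and I expect to settle it either by a short case check on $a$ and $c$ or, more cleanly, by extending the Pless-power-moment computation of Theorem~\ref{thm-bchccc1} to $\C_{\delta_2}$ to show that the frequency $A_{\delta_2}$ of weight-$\delta_2$ codewords is strictly positive.
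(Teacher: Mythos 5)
Your proposal is correct and is essentially the paper's own argument carried out in more detail: the paper obtains the dimension from Lemmas \ref{lem-1stcosetleader} and \ref{lem-2ndcosetleader}, gets $d\geq\delta_2$ from the BCH bound, passes to the trace representation (\ref{eqn-aug21}) via Delsarte's theorem, and then disposes of $d=\delta_2$ for prime $q$ by saying one can ``refine the proof of Theorem \ref{thm-bchccc1} with the quadratic expression of (\ref{eqn-aug21})'' and leaving the details to the reader --- which is exactly the twisted character-sum computation you perform (your all-one-complement shortcut for $q=2$ is a minor, valid simplification). The one loose end you flag (realizing $\varepsilon\,\eta'(-a)\,\eta(c-u)=1$ with $c\neq 0$, in particular for $q=3$) closes immediately: fix $a\neq 0$, pick an isotropic $x_0\neq 0$ of the nondegenerate quadratic form $x\mapsto\tr\left(ax^{1+q^h}\right)$ (one exists since $m\geq 3$ is odd), so that $u=0$ and the corresponding $b$ is nonzero, and then take $c\in\gf(q)^*$ in whichever square class gives the sign $+1$; both classes are nonempty for every odd $q$.
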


\begin{proof}
The conclusions on the dimension $k$ follow from Lemmas \ref{lem-1stcosetleader} and \ref{lem-2ndcosetleader}. 
By the BCH bound,  the minimum distance $d \geq \delta_2$. 

It follows from the definition of $\C_{(q,\, m,\, \delta_2)}$ and Lemmas \ref{lem-1stcosetleader} and \ref{lem-2ndcosetleader} that the check polynomial of this code is $(x-1)m_{\delta_1(x)}m_{\delta_2(x)}$. 
Notice that $\delta_1=n-q^{m-1}$ and 
$$ 
\delta_2=(q-1)q^{m-1}-1-q^{\lfloor (m-1)/2\rfloor} = n-(q^{m-1}+q^{\lfloor (m-1)/2\rfloor} ). 
$$

From Delsarte's Theorem we then deduce that $\C_{(q,\, m,\, \delta_2)}$ is equivalent to  
the following code 
\begin{eqnarray}\label{eqn-aug21}
\C_{\delta_2}=\left\{\left(\tr\left(ax+bx^{1+q^{\lfloor (m-1)/2\rfloor+1}}\right)+c\right)_{x \in \gf(q^m)^*}: 
                           a \in \gf(q^m),\, b \in \gf(q^m),\, c \in \gf(q) \right\}. 
\end{eqnarray}

To prove that $d=\delta_2$ for the case that $q$ is a prime, one can refine the proof of Theorem \ref{thm-bchccc1} 
with the quadratic expression of (\ref{eqn-aug21}) to obtain the weight distribution of the code. We leave the details 
to interested readers. 
 
\end{proof}

\begin{table}[ht]
\begin{center} 
\caption{Examples of  $\C_{(q,\, m,\, \delta_2)}$ of Theorem \ref{thm-bchccc2} }\label{tab-bchthm2}
\begin{tabular}{rrrrrl} \hline
$n$ &   $k$   & $d=\delta_2$  & $m$ & $q$  & Optimality  \\ \hline  
$15$ & $7$   & $5$                           & $4$  & $2$ &  Yes                 \\ 
$31$ & $11$   & $11$                           & $5$  & $2$ &  Yes                  \\ 
$63$ & $10$   & $27$                           & $6$  & $2$ & Best known                   \\ 
$127$ & $15$   & $55$                           & $7$  & $2$ & Best known                  \\ 
$255$ & $13$   & $119$                           & $8$  & $2$ &  Best known                  \\  
$26$ & $7$   & $14$                           & $3$  & $3$ &  Optimal                  \\ 
 $80$ & $7$   & $50$                           & $4$  & $3$ &  Optimal                  \\ 
$242$ & $11$   & $152$                           & $5$  & $3$ &  Best known                  \\ \hline
\end{tabular}
\end{center} 
\end{table} 

Examples of the code $\C_{(q,\, m,\, \delta_2)}$ are summarized in Table \ref{tab-bchthm2}. They 
are sometimes optimal, and sometimes have the same parameters as the best linear codes in the 
Database. When $(q,m)=(2,6)$, the cyclic code $\C_{(q,\, m,\, \delta_2)}$ has parameters 
$[63,10,27]$, which are the best possible parameters according to \cite[p. 258]{Dingbk15}.   
 When $(q,m)=(3,3)$, the cyclic code $\C_{(q,\, m,\, \delta_2)}$ has parameters 
$[26,7,14]$, which are the best possible parameters according to \cite[p. 300]{Dingbk15}.

\section{The parameters of the BCH codes $\C_{(q,\, m,\, \delta_3)}$ and $\tilde{\C}_{(q,\, m,\, \delta_3)}$, where $\delta_3=(q-1)q^{m-1}-1-q^{\lfloor (m+1)/2\rfloor}$} 

Before determining the parameters of the codes $\C_{(q,\, m,\, \delta_3)}$ and $\tilde{\C}_{(q,\, m,\, \delta_3)}$, 
we have to prove the following lemma. 

\begin{lemma}\label{lem-3rdcosetleader}  
Let $m \geq 4$. 
Then the third largest $q$-cyclotomic coset leader modulo $n$ is $\delta_3=(q-1)q^{m-1}-1-q^{\lfloor (m+1)/2\rfloor}$. In addition, 
$|\delta_3|=m$. 
\end{lemma}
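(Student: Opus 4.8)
The plan is to translate everything into the language of $q$-adic digit strings and then enumerate \emph{all} coset leaders lying in the interval $[\delta_3,\delta_1]$, showing there are exactly three. Recall from the mechanism behind Lemmas~\ref{lem-1stcosetleader} and \ref{lem-2ndcosetleader} that multiplication by $q$ modulo $n=q^m-1$ is a cyclic shift of the length-$m$ string of $q$-adic digits, so the coset leader of $C_i$ is the rotation of smallest numerical value, read most-significant-digit first. The first step is a structural reduction. Since $\delta_1=(q-1)q^{m-1}-1$ and, for $m\ge 4$, one checks that $(q-2)q^{m-1}\le \delta_3<\delta_1<(q-1)q^{m-1}$, every integer $J\in[\delta_3,\delta_1]$ has top $q$-adic digit exactly $q-2$. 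If moreover $J$ is a coset leader, its top digit must equal the minimum of all its digits (otherwise rotating a smaller digit to the top produces a strictly smaller element of $C_J$); hence every digit of $J$ lies in $\{q-2,q-1\}$. Thus each coset leader $J\in[\delta_3,\delta_1]$ is encoded by its set $Z=\{j:\ d_j=q-2\}$ of low-digit positions, with $m-1\in Z$, $k:=|Z|\ge 1$, and $J=n-\sum_{j\in Z}q^j$.

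The second step enumerates by $k$, with $s:=\lfloor (m+1)/2\rfloor$. For $k=1$ the only option is $Z=\{m-1\}$, giving $J=\delta_1$. For $k=2$, the leader of the necklace with smaller circular gap $g$ (where $1\le g\le\lfloor m/2\rfloor$) is $Z=\{m-1,\,m-1-g\}$, and the threshold $J\ge\delta_3$ becomes $m-1-g\le s$, i.e. $g\ge m-1-s$. A short parity check shows that $m-1-s\le g\le \lfloor m/2\rfloor$ admits exactly two integers, which yield precisely $\delta_2$ and $\delta_3$ (consistent with Lemma~\ref{lem-2ndcosetleader}). In particular $\delta_3$ is itself a coset leader.

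The third and decisive step rules out $k\ge 3$. For a leader the leading gap $a_1$ equals the minimum of the $k$ circular gaps, so $a_1\le (m-k)/k\le (m-3)/3$; on the other hand $J\ge\delta_3$ forces the second-highest element $j_2$ of $Z$ to satisfy $j_2\le s-1$ (since $q^{j_2}$ already has to fit under the budget $q^{s}$), whence $a_1=m-2-j_2\ge m-1-s$. For $m\ge 4$ one verifies $(m-3)/3<m-1-s$ in both parities, a contradiction, so no coset leader with $k\ge 3$ reaches $\delta_3$. Combining the three cases, the coset leaders in $[\delta_3,\delta_1]$ are exactly $\delta_1,\delta_2,\delta_3$, proving $\delta_3$ is the third largest. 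For the size, the stabilizer of $\delta_3$ under the shift action consists of the $d$ with $Z+d\equiv Z\pmod m$; as $|Z|=\{m-1,s\}$ has two elements, a nonzero $d$ would have to swap $m-1$ and $s$, forcing $2(s+1)\equiv 0\pmod m$, which fails for $m\ge 4$, so $|C_{\delta_3}|=m$.

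The main obstacle I anticipate is the $k\ge 3$ exclusion: one must justify cleanly that being a coset leader makes the leading gap the \emph{minimal} gap, and then match the averaging bound $(m-3)/3$ against the threshold position $s=\lfloor (m+1)/2\rfloor$ uniformly across the two parities (and for $q=2$, where $q-2=0$ is the low digit). The $k=2$ bookkeeping and the stabilizer computation are then routine once the digit/rotation dictionary is in place; alternatively, this whole argument can be replaced by the more computational route of Lemma~\ref{lem-2ndcosetleader}, exhibiting for each $J_i=\delta_1-i$ with $q^{\lfloor(m-1)/2\rfloor}<i<q^{s}$ an explicit shift $J_iq^{\ell}\bmod n<J_i$.
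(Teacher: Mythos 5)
Your proposal is correct in substance and takes a genuinely different route from the paper. The paper splits by the parity of $m$, exhibits the coset $C_{\delta_3}$ explicitly to see that $\delta_3$ is a leader with $|C_{\delta_3}|=m$, and then eliminates every $J_i=\delta_2-i$ one at a time by writing out the $q$-adic expansion of $i$ and producing an explicit smaller multiple $J_i q^{m-1-\ell}\bmod n$ (with the even case only asserted, ``details omitted''). You instead classify \emph{all} coset leaders in $[\delta_3,\delta_1]$ at once: leader $\Rightarrow$ top digit is the minimal digit $\Rightarrow$ all digits in $\{q-2,q-1\}$, encode by the low-digit set $Z$ with $m-1\in Z$, and enumerate by $k=|Z|$. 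This buys uniformity in the parity of $m$ and in $q$, gives the coset size by a one-line stabilizer computation instead of listing $C_{\delta_3}$, and actually settles the even case that the paper glosses over. Notably, your enumeration confirms the lemma statement's value $q^{\lfloor(m+1)/2\rfloor}=q^{m/2}$ for even $m$, whereas the paper's own Case II display writes $\delta_3=(q-1)q^{m-1}-1-q^{(m+2)/2}$, which is inconsistent with the statement and incorrect: for $q=2$, $m=6$ the third largest leader is $23=31-2^{3}$, not $15=31-2^{4}$ (the element $n-q^{m-1}-q^{(m+2)/2}$ is a leader, but a smaller one). Your approach is the cleaner arbiter here.

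Two constants in your Step 3 are wrong as stated, though both are harmless after repair. First, $a_1\le (m-k)/k$ is false in general (take all $k$ circular gaps equal); the correct averaging bound from $\sum_i a_i=m$ and $a_1=\min_i a_i$ is $a_1\le m/k\le m/3$. Second, with $j_1=m-1$ you have $a_1=m-1-j_2$, not $m-2-j_2$; from $q^{j_2}+\cdots+q^{j_k}\le q^{s}$ and $k\ge 3$ you get $j_2\le s-1$, hence $a_1\ge m-s$. The corrected chain still closes in both parities for $m\ge 4$: for odd $m$, $m-s=(m-1)/2>m/3$ iff $m>3$; for even $m$, $m-s=m/2>m/3$ always. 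Finally, the fact you flag yourself --- that a leader's leading gap is the minimal circular gap --- does need its one-line justification: among rotations placing an element of $Z$ at position $m-1$, the value $n-\sum_{j\in Z'}q^{j}$ is minimized by maximizing the shifted exponent set, and sums of distinct powers of $q$ compare lexicographically on the sorted exponents because $q^{b}+q^{b-1}+\cdots+q^{0}<q^{b+1}$. With these repairs (and the typo $Z=\{m-1,s\}$ rather than $|Z|=\{m-1,s\}$), your argument is complete; your Step 2 bookkeeping, including the exact two values of $g$ in each parity and the stabilizer condition $2(s+1)\equiv 0\pmod m$ failing for $m\ge 4$, checks out.
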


\begin{proof}
The proof is divided into the following two cases according to the parity of $m$. 

\subsubsection*{Case I, i.e., $m$ is odd}

In this case, we have 
$$ 
\delta_3=(q-1)q^{m-1}-1-q^{(m+1)/2}=n-\left(q^{(m-3)/2}+1\right)q^{(m+1)/2}. 
$$
It can be verified that 
\begin{eqnarray*}
C_{\delta_3} &=& \left\{n-\left(q^{(m+3)/2}+1\right)q^i: i=0, 1, \ldots, \frac{m-5}{2}\right\} \bigcup \\
                      &  & \left\{n-\left(q^{(m-3)/2}+1\right)q^i: i=0, 1, \ldots, \frac{m+1}{2}\right\}.  
\end{eqnarray*}
Therefore, $\delta_3$ is the smallest integer in $C_{\delta_3}$ and is thus the coset leader. Clearly, we 
have $|C_{\delta_3}|=m$. 

Let $t=(m-1)/2$. By definition, 
\begin{eqnarray*}
\delta_2 &=& (q-1)q^{m-1}-1-q^{(m-1)/2} \\
&=& (q-2)q^{m-1} +  (q-1)q^{m-2} +   (q-1)q^{m-3} + \ldots +  (q-1)q^{t+1} + \\
& &  (q-2)q^{t} +  (q-1)q^{t-1} +   (q-1)q^{t-2} + \ldots +  (q-1)q + (q-1).    
\end{eqnarray*}
Observe that $\delta_2-\delta_3=(q-1)q^{t}$. We need to prove that $J_i:=\delta_2-i$ is not a coset leader 
for all $i$ with $1 \leq i \leq (q-1)q^{t}-1$. 

Notice that 
$$ 
(q-1)q^{t}-1=(q-2)q^t+(q-1)q^{t-1}+(q-1)q^{t-2}+ \ldots + (q-1)q+q-1. 
$$  
The $q$-adic expansion of $i$ must be of the form 
$$
i=i_{t}q^t + i_{t-1}q^{t-1} + \ldots + i_1q+i_0, 
$$
where  $i_\ell$ satisfies $0 \leq i_\ell \leq q-1$ for all $0 \leq \ell \leq t-1$ and $0 \leq i_t \leq q-2$, but at least one 
of the $i_\ell$'s is nonzero. 
It then follows 
that the $q$-adic expansion of $J_i$ is given by 
\begin{eqnarray*}
J_i &=& (q-2)q^{m-1}+(q-1)q^{m-2}+(q-1)q^{m-3}+ \ldots + (q-1)q^{t+1} + \\
     & &   (q-2-i_t)q^t +(q-1-i_{t-1})q^{t-1} + (q-1-i_{t-2})q^{t-2} +  \ldots + (q-1-i_1)q + q-1-i_0. 
\end{eqnarray*} 

\subsubsection*{Subcase I.1, i.e., $q=2$} 

In this subcase, we have $i_t=0$ and 
\begin{eqnarray*}
J_i = 2^{m-2}+2^{m-3}+ \ldots + 2^{t+1}  +(1-i_{t-1})2^{t-1} + \ldots + (1-i_1)2 + 1-i_0. 
\end{eqnarray*} 

If $i_0=1$, then $J_i/2 < J_i$. But $J_i/2$ and $J_i$ are in the same $2$-cyclotomic coset modulo $n$. 
Hence, $J_i$ cannot be a coset leader. 

We now assume that $i_0=0$. Since $i \neq 0$, one of the $i_\ell$'s must be nonzero. Let $\ell$ denote 
the largest one such that $i_\ell =1$. One can then verify that 
$$ 
J_i 2^{m-1-\ell} \bmod{n} < J_i. 
$$
Whence, $J_i$ cannot be a coset leader. 

\subsubsection*{Subcase I.2, i.e., $q>2$} 

If $i_t \geq 1$, then $J_i q^{m-1-t} \bmod{n}<J_i$. In this case, 
$J_i$ cannot be a coset leader. 

If $i_\ell \geq 2$ for some $\ell$ with $0 \leq \ell \leq t-1$, then $J_i q^{m-1-\ell} \bmod{n}<J_i$. In this case, 
$J_i$ cannot be a coset leader. 

We now assume that all $i_\ell \in \{0,1\}$ for all $0 \leq \ell \leq t-1$ and $i_t=0$. Since $i \geq 1$, at least one of the $i_\ell$'s must be 1. 
 Let $\ell$ denote 
the largest one such that $i_\ell =1$. One can then verify that 
$$ 
J_i q^{m-1-\ell} \bmod{n} < J_i. 
$$
Whence, $J_i$ cannot be a coset leader. 

Summarizing all the conclusions above, we deduce that $\delta_3$ is the third largest coset leader for the case 
that $m$ is odd. 

\subsubsection*{Case II, i.e.,  $m$ is even}

In this case, we have 
$$ 
\delta_3=(q-1)q^{m-1}-1-q^{(m+2)/2}=n-\left(q^{(m-4)/2}+1\right)q^{(m+2)/2}. 
$$
It is easily seen that 
\begin{eqnarray*}
C_{\delta_3} &=& \left\{n-\left(q^{(m-4)/2}+1\right)q^i: i=0, 1, \ldots, \frac{m+2}{2}\right\} \bigcup \\
                      & & \left\{n-\left(q^{(m+4)/2}+1\right)q^i: i=0, 1, \ldots, \frac{m-6}{2}\right\}. 
\end{eqnarray*}
Therefore, $\delta_3$ is the smallest integer in $C_{\delta_3}$ and is the coset leader. Obviously, 
$|C_{\delta_3}|=m$. 

Similarly as in the case that $m$ is odd, one can prove that $\delta_3$ is the third largest coset leader for the case 
that $m$ is even. Details are omitted here. 
\end{proof}

\begin{table}[ht]
\caption{The weight distribution of $\tilde{\C}_{(2,\, m,\, \delta_3)}$ for odd $m$.}\label{tab-zhou3}
\centering
\begin{tabular}{ll}
\hline
Weight $w$    & No. of codewords $A_w$  \\ \hline
$0$                                                        & $1$ \\ 
$2^{m-1}-2^{(m+1)/2}$           & $(2^m-1)\cdot 2^{(m-5)/2}\cdot (2^{(m-3)/2}+1)\cdot (2^{m-1}-1)/3$ \\ 
$2^{m-1}-2^{(m-1)/2}$           & $(2^m-1)\cdot 2^{(m-3)/2}\cdot (2^{(m-1)/2}+1)\cdot (5\cdot 2^{m-1}+4)/3$ \\ 
$2^{m-1}$           &           ${(2^m-1)}\cdot (9\cdot 2^{2m-4}+3\cdot 2^{m-3}+1)$ \\ 
$2^{m-1}+2^{(m-1)/2}$           & $(2^m-1)\cdot 2^{(m-3)/2}\cdot (2^{(m-1)/2}-1)\cdot (5\cdot 2^{m-1}+4)/3$ \\ 
$2^{m-1}+2^{(m+1)/2}$           & $(2^m-1)\cdot 2^{(m-5)/2}\cdot (2^{(m-3)/2}-1)\cdot (2^{m-1}-1)/3$ \\ \hline
\end{tabular}
\end{table}

\begin{table}[ht]
\caption{The weight distribution of $\tilde{\C}_{(2,\, m,\, \delta_3)}$ for even $m$.}\label{tab-zengli3}
\centering
\begin{tabular}{ll}
\hline
Weight $w$    & No. of codewords $A_w$  \\ \hline
$0$                                                        & $1$ \\ 
$2^{m-1}-2^{m/2}$           &  $(2^{m/2}-1) (2^{m-3} + 2^{(m-4)/2})(2^{m+1}+2^{m/2}-1)/3$ \\ 
$2^{m-1}-2^{(m-2)/2}$           &  $(2^{m/2}-1) (2^{m-1} + 2^{(m-2)/2})(2^m+2^{(m+2)/2}+4)/3$ \\ 
$2^{m-1}$           &   $(2^{m/2}-1)(2^{2m-1} + 2^{(3m-4)/2}-2^{m-2}+2^{m/2} +1)$         \\ 
$2^{m-1}+2^{(m-2)/2}$           &  $(2^{m/2}-1) (2^{m-1} - 2^{(m-2)/2})(2^m+2^{(m+2)/2}+4)/3$\\ 
$2^{m-1}+2^{m/2}$           & $(2^{m/2}-1) (2^{m-3} - 2^{(m-4)/2})(2^{m+1}+2^{m/2}-1)/3$ \\ \hline
\end{tabular}
\end{table}

\begin{theorem}\label{thm-bchccc3}
Let $m \geq 4$. 
The code $\tilde{\C}_{(q, m, \delta_3)}$ has papameters $[n,\, \tilde{k},\, \tilde{d}]$, where $\tilde{d} \geq \delta_3+1=(q-1)q^{m-1}-q^{\lfloor (m+1)/2\rfloor}$ and 
\begin{eqnarray}
\tilde{k}=\left\{ \begin{array}{ll}
3m  & \mbox{ for odd } m, \\
\frac{5m}{2} & \mbox{ for even } m.  
\end{array}
\right.
\end{eqnarray}

Furthermore, when $q=2$ and $m$ is odd, the binary code $\tilde{\C}_{(q, m, \delta_3)}$ has minimum distance 
$\tilde{d} = \delta_3+1$ and its weight distribution is given in Table  \ref{tab-zhou3}. 
When $q=2$ and $m$ is even, the binary code $\tilde{\C}_{(q, m, \delta_3)}$ has minimum distance 
$\tilde{d} = \delta_3+1$ and its weight distribution is given in Table \ref{tab-zengli3}. 

When $q$ is an odd prime and $m \geq 4$ is even, the code $\tilde{\C}_{(q, m, \delta_3)}$ has minimum distance 
$\tilde{d} = \delta_3+1$ and its weight distribution is given in Table  \ref{tab-zengli4}. 

When $q$ is an odd prime and $m \geq 5$ is odd, the code $\tilde{\C}_{(q, m, \delta_3)}$ has minimum distance 
$\tilde{d} = \delta_3+1$ and its weight distribution is given in Table  \ref{tab-zengli5}. 

\end{theorem}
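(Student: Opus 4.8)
The plan is to mirror the proof of Theorem~\ref{thm-bchccc1}, the only genuinely new feature being that three nonzero cyclotomic cosets now enter instead of two. First I would settle the dimension. By Lemmas~\ref{lem-1stcosetleader}, \ref{lem-2ndcosetleader} and~\ref{lem-3rdcosetleader}, the integers $\delta_1>\delta_2>\delta_3$ are precisely the three largest $q$-cyclotomic coset leaders modulo $n$. Hence, apart from $C_0$, the cosets disjoint from $\bigcup_{j=1}^{\delta_3-1}C_j$ are exactly $C_{\delta_1}$, $C_{\delta_2}$ and $C_{\delta_3}$, so the check polynomial of $\tilde{\C}_{(q,m,\delta_3)}$ is $m_{\delta_1}(x)m_{\delta_2}(x)m_{\delta_3}(x)$. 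Therefore $\tilde{k}=|C_{\delta_1}|+|C_{\delta_2}|+|C_{\delta_3}|$, which equals $3m$ for odd $m$ and $\tfrac{5m}{2}$ for even $m$ by the coset-size formulas in the three lemmas, while the BCH bound gives $\tilde{d}\ge \delta_3+1$ at once.

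Next I would produce a trace representation. Writing $\delta_1=n-q^{m-1}$, $\delta_2=n-(q^{m-1}+q^{\lfloor(m-1)/2\rfloor})$ and $\delta_3=n-(q^{m-1}+q^{\lfloor(m+1)/2\rfloor})$, and setting $h_2=\lfloor(m-1)/2\rfloor+1$ and $h_3=\lfloor(m+1)/2\rfloor+1$, Delsarte's Theorem~\cite{Delsarte} shows that $\tilde{\C}_{(q,m,\delta_3)}$ is equivalent, up to a coordinate permutation, to
\[
\tilde{\C}_{\delta_3}=\left\{\left(\tr\left(ax+bx^{1+q^{h_2}}+cx^{1+q^{h_3}}\right)\right)_{x\in\gf(q^m)^*}:a,b,c\in\gf(q^m)\right\}.
\]
As in the proof of Theorem~\ref{thm-bchccc1}, when $|C_{\delta_2}|$ is a proper divisor of $m$ (the even-$m$ situation, where $h_2=m/2$) the coefficient $b$ effectively ranges over a subfield-restricted set, and this bookkeeping has to be carried out so that the codeword count is exactly $q^{\tilde{k}}$.

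The core of the argument is the weight computation. Exactly as in~(\ref{eqn-wtform}),
\[
\wt(\bc_{(a,b,c)})=(q-1)q^{m-1}-\frac{1}{q}\sum_{z\in\gf(q)^*}\sum_{x\in\gf(q^m)}\chi_1'\!\left(zax+zbx^{1+q^{h_2}}+zcx^{1+q^{h_3}}\right),
\]
and the inner sum is a quadratic Weil sum controlled by the $\gf(q)$-quadratic form $Q_{b,c}(x)=\tr(bx^{1+q^{h_2}}+cx^{1+q^{h_3}})$. Its magnitude, its sign, and the effect of the linear term $\tr(ax)$ are all determined by the type of $Q_{b,c}$ and by its corank $\rho_{b,c}=\dim_{\gf(q)}\ker L_{b,c}$, where $L_{b,c}$ is the self-adjoint linearized polynomial
\[
L_{b,c}(x)=bx^{q^{h_2}}+b^{q^{m-h_2}}x^{q^{m-h_2}}+cx^{q^{h_3}}+c^{q^{m-h_3}}x^{q^{m-h_3}}
\]
attached to $Q_{b,c}$; the term $\tr(ax)$ merely shifts the sum and contributes $0$ unless $a$ lies in the image of $L_{b,c}$. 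After normalising $L_{b,c}$ by a power of the Frobenius it becomes a linearized polynomial of $q$-degree at most $3$ for odd $m$ and at most $2$ for even $m$, so $\rho_{b,c}\le 3$ (resp. $\le 2$); this is exactly what restricts the code to the few weights listed in the Tables.

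The main obstacle is the corank distribution: for each admissible value of $\rho$ one must count the pairs $(b,c)$ with $\dim_{\gf(q)}\ker L_{b,c}=\rho$. I would do this by counting, for a fixed $x\ne 0$, the pairs $(b,c)$ solving the $\gf(q)$-linear equation $L_{b,c}(x)=0$, and then removing the overcount through the simultaneous systems $L_{b,c}(x)=L_{b,c}(y)=0$; the answer depends delicately on $\gcd(h_2,m)$ and $\gcd(h_3,m)$, which is why the statement separates odd from even $m$. For $q=2$ the resulting corank distribution, and hence the weight distributions of Tables~\ref{tab-zhou3} and~\ref{tab-zengli3}, are already available in the literature on binary primitive BCH codes \cite{Kasa69}, so those two cases may be quoted. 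For odd prime $q$ I would instead carry out the count directly, using the Gauss-sum evaluations~(\ref{eqn-G1})--(\ref{eqn-G2}) together with Coulter's quadratic-sum theorems~\cite{Coulter} to fix both the magnitudes and the signs of the Weil sums, and then confirm the closed-form entries of Tables~\ref{tab-zengli4} and~\ref{tab-zengli5} by matching them against the first several Pless power moments~\cite{HP03}, which requires first checking that the dual distance of $\tilde{\C}_{\delta_3}$ is large enough that sufficiently many moment equations are available.
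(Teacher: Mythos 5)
Your structural work matches the paper exactly: the dimension argument via Lemmas \ref{lem-1stcosetleader}, \ref{lem-2ndcosetleader} and \ref{lem-3rdcosetleader}, the check polynomial $m_{\delta_1}(x)m_{\delta_2}(x)m_{\delta_3}(x)$, the Delsarte trace representation with exponents $1+q^{h}$ and $1+q^{h+1}$, and the citation of \cite{Kasa69} for both binary cases; your remark about the subfield-restricted range of $b$ for even $m$ is a correct observation the paper leaves implicit. Where you diverge is the odd-prime-$q$ cases, and here the paper does no direct computation at all: for even $m$ it identifies $\tilde{\C}_{\delta_3}$ as a special case of the nonbinary Kasami codes, quoting Table 2 of \cite{zengseta}, and for odd $m$ it makes the one genuinely clever move of the proof, namely that with $h=(m+1)/2$ one has $3h=m+h+1$, hence $1+q^{h+1}\equiv 1+q^{3h} \pmod{n}$, so that the code falls verbatim into the family of Theorem 2 of \cite{Zeng10}, where $\gcd(h,m)=1$. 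You propose instead to re-derive these weight distributions from scratch via the pencil of quadratic forms $Q_{b,c}$.

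That re-derivation is where your proposal has a genuine gap: it stops at exactly the hardest point. Your corank framework is sound (the self-adjoint linearized polynomial $L_{b,c}$ is correct, and the Frobenius-normalization bounding $\rho_{b,c}\le 3$ for odd $m$ and $\le 2$ for even $m$ checks out), but the joint distribution of the corank \emph{together with the type} of $Q_{b,c}$ over all pairs $(b,c)$ is asserted, not proved. Your moment-counting sketch (count solutions of $L_{b,c}(x)=0$ for fixed $x$, then fixed pairs $x,y$) yields only two or three linear constraints, which cannot separate coranks in $\{0,1,2,3\}$ each split by sign; this count is precisely the multi-page content of \cite{zengseta} and \cite{Zeng10}. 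Moreover, the tools you invoke do not cover it: the theorems of \cite{Coulter} evaluate $\sum_{x}\chi_1'\left(ax^{q^h+1}+bx\right)$ for a \emph{single} quadratic monomial and do not apply to $bx^{1+q^{h_2}}+cx^{1+q^{h_3}}$, and your Pless-moment fallback cannot close the gap either, since Tables \ref{tab-zengli5} and \ref{tab-zengli4} contain $7$ and $8$ nonzero weights respectively, while the dual distance is nowhere near large enough to supply that many moment equations (even in the binary case Theorem \ref{thm-dualdis181} gives only $7$ or $5$). Since the claims $\tilde{d}=\delta_3+1$ rest on these weight distributions, the odd-prime-$q$ half of the theorem remains unproven in your plan unless you either carry out the full rank/type count or find the paper's congruence $1+q^{h+1}\equiv 1+q^{3h}$, which is what legitimizes the citation route for odd $m$.
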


\begin{proof}
The conclusions on the dimension $\tilde{k}$ follow from Lemmas \ref{lem-1stcosetleader}, \ref{lem-2ndcosetleader} 
and \ref{lem-3rdcosetleader}. 
By the BCH bound,  the minimum distance $\tilde{d} \geq \delta_3+1$. 

It follows from the definition of $\tilde{\C}_{(q,\, m,\, \delta_3)}$ and Lemmas \ref{lem-1stcosetleader},  \ref{lem-2ndcosetleader} and \ref{lem-3rdcosetleader} that the check polynomial of this code is 
$m_{\delta_1(x)}m_{\delta_2(x)}m_{\delta_3(x)}$. 
Notice that 
$$ 
\delta_3=(q-1)q^{m-1}-1-q^{\lfloor (m+1)/2\rfloor} = n-(q^{m-1}+q^{\lfloor (m+1)/2\rfloor} ). 
$$

From Delsarte's Theorem we then deduce that $\tilde{\C}_{(q,\, m,\, \delta_3)}$ is equivalent to 
the following code 
\begin{eqnarray*}
\tilde{\C}_{\delta_3}=\left\{
\begin{array}{r}
\left(\tr\left(ax+bx^{1+q^{h} }+cx^{1+q^{h+1}}\right)\right)_{x \in \gf(q^m)^*}:   
\\
        a \in \gf(q^m),\, b \in \gf(q^m),\,  c \in \gf(q^m)
\end{array}
        \right\}, 
\end{eqnarray*} 
where $h=\lfloor (m-1)/2\rfloor+1$. 

When $q=2$, the binary code $\tilde{\C}_{\delta_3}$ has minimum distance 
$\tilde{d} = \delta_3+1$ and its weight distribution was settled in  \cite{Kasa69}. 

When $q$ is an odd prime and $m \geq 4$ is even, the code $\tilde{\C}_{\delta_3}$ has minimum distance 
$\tilde{d} = \delta_3+1$ and its weight distribution in Table \ref{tab-zengli4} is a special case of Table  2 in 
\cite{zengseta}. 

When $q$ is an odd prime and $m \geq 5$ is odd, we have $h=(m+1)/2$ and $h+1=(m+3)/2$. It is easy to 
see that $\gcd(m, h)=1$ and  
$$ 
1+q^{3h} \equiv  1+q^{h+1} \pmod{n}. 
$$ 
It then follows that 
\begin{eqnarray*}
\tilde{\C}_{\delta_3}=\left\{
\begin{array}{r}
\left(\tr\left(ax+bx^{1+q^{h} }+cx^{1+q^{3h}}\right)\right)_{x \in \gf(q^m)^*}:   
\\
        a \in \gf(q^m),\, b \in \gf(q^m),\,  c \in \gf(q^m)
\end{array}
        \right\}.  
\end{eqnarray*} 
In this case, the weight distribution of $\tilde{\C}_{\delta_3}$ is a special case of Theorem  2 in 
\cite{Zeng10}. 

\end{proof}

\begin{table}[ht]
\caption{The weight distribution of $\tilde{\C}_{(q,\, m,\, \delta_3)}$ for even $m$ and odd $q$.}\label{tab-zengli4}
\centering
\begin{tabular}{ll}
\hline
Weight $w$    & No. of codewords $A_w$  \\ \hline
$0$                                                        & $1$ \\ 
$(q-1)q^{m-1}-q^{m/ 2}$   & $(q^m-1) ((q^2-1) (q^{(3 m-6) /2}+q^{m-2}) + 
        2 (q^{(m-2) /2}-1) (q^{m-3}+q^{(m-4) /2})) /2 (q+1)$ \\ 
$(q-1) (q^{m-1}-q^{(m-2) /2})$   & $q (q^{m /2}+1) (q^m-1) (q^{m-1}+(q-1) q^{(m-2) /2}) /2 (q+1)$ \\ 
$(q-1) q^{m-1}-q^{(m-2) /2}$   & $(q^{m+1}-2 q^m+q) (q^{m /2}-1) (q^{m-1}+q^{(m-2) /2}) /2$ \\ 
$(q-1) q^{m-1}$   & $(q^m-1) (1+q^{(3 m-2) /2}-q^{(3 m-4) /2} +2 q^{(3 m-6) /2}-q^{m-2})$ \\ 
$(q-1) q^{m-1}+q^{(m-2) /2}$   & $ q (q^{m /2}+1) (q^m-1) (q-1) (q^{m-1}-q^{(m-2) /2}) /2 (q+1)$ \\ 
$ (q-1) (q^{m-1}+q^{(m-2) /2})$   & $(q^{m+1}-2 q^m+q) (q^{m /2}-1) (q^{m-1}-(q-1) q^{(m-2) /2}) /2 (q-1) $ \\ 
$(q-1) q^{m-1}+q^{m /2}$   & $ q^{(m-2) /2} (q^m-1) (q-1) (q^{m-2}-q^{(m-2) /2}) /2 $ \\ 
$(q-1) (q^{m-1}+q^{m /2}) $   & $ (q^{(m-2) /2}-1) (q^m-1) (q^{m-3}-(q-1) q^{(m-4) /2}) /(q^2-1)$ \\ \hline  
\end{tabular}
\end{table}

\begin{table}[ht]
\caption{The weight distribution of $\tilde{\C}_{(q,\, m,\, \delta_3)}$ for odd $m$ and odd $q$.}\label{tab-zengli5}
\centering
\begin{tabular}{ll}
\hline
Weight $w$    & No. of codewords $A_w$  \\ \hline
$0$                                                        & $1$ \\ 
$(q-1)q^{m-1}-q^{(m+1)/2}$     & $(q^m-1)(q^{m-3}+q^{(m-3)/2})(q^{m-1}-1) / 2(q+1)$ \\
$(q-1)(q^{m-1}-q^{(m-1)/2})$     & $(q^m-1)(q^{m-1}+q^{(m-1)/2})(q^{m-2}+(q-1)q^{(m-3)/2})/2$ \\
$(q-1)q^{m-1}-q^{(m-1)/2}$     & $(q^m-1)(q^{m-2}+q^{(m-3)/2})(q^{m+3}-q^{m+2}-q^{m-1}-q^{(m+3)/2}+q^{(m-1)/2} +q^3 )/2(q+1)$ \\
$(q-1)q^{m-1}$     & $(q^m-1)(1+(q^2-q+1)q^{m-3}+(q-1)q^{2m-4}+(q-2)q^{2m-2} +q^{2m-1})$ \\
$(q-1)q^{m-1}+q^{(m-1)/2}$     & $(q^m-1)(q^{m-2}-q^{(m-3)/2})(q^{m+3}-q^{m+2}-q^{m-1}+q^{(m+3)/2}-q^{(m-1)/2} +q^3 )/2(q+1)$ \\
$(q-1)(q^{m-1}+q^{(m-1)/2})$     & $(q^m-1)(q^{m-1}-q^{(m-1)/2})(q^{m-2}-(q-1)q^{(m-3)/2})/2$ \\
$(q-1)q^{m-1}+q^{(m+1)/2}$     & $(q^m-1)(q^{m-3}-q^{(m-3)/2})(q^{m-1}-1) / 2(q+1)$ \\
 \hline  
\end{tabular}
\end{table}

The following theorem is proved in \cite{Kasa69}. 

\begin{theorem}\label{thm-dualdis181}
The minimum distance $\tilde{d}^\perp$ of the dual of $\tilde{\C}_{(2, m, \delta_3)}$ is equal to $7$ when $m \geq 5$ 
is odd, and $5$ when $m \geq 6$ is even. 
\end{theorem}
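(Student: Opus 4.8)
The final statement to prove is Theorem~\ref{thm-dualdis181}, concerning the minimum distance $\tilde{d}^\perp$ of the dual of the binary code $\tilde{\C}_{(2,m,\delta_3)}$. The plan is to work with the explicit trace realization established in the proof of Theorem~\ref{thm-bchccc3}: up to coordinate permutation, $\tilde{\C}_{(2,m,\delta_3)}$ equals the code $\tilde{\C}_{\delta_3}$ whose codewords are $\left(\tr\left(ax+bx^{1+q^{h}}+cx^{1+q^{h+1}}\right)\right)_{x\in\gf(q^m)^*}$ with $h=\lfloor(m-1)/2\rfloor+1$ and $q=2$. The dual minimum distance is most naturally obtained through the defining set: since $\tilde{\C}_{(2,m,\delta_3)}$ has check polynomial $m_{\delta_1}(x)m_{\delta_2}(x)m_{\delta_3}(x)$, its dual is (equivalent to) a cyclic code whose zeros are precisely the reciprocals of the roots of this check polynomial, i.e.\ a code generated by the reciprocals of $\alpha^{\delta_1},\alpha^{\delta_2},\alpha^{\delta_3}$ and their conjugates. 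First I would translate the dual-distance question into a statement about the defining set of the dual and apply the BCH/Hartmann--Tzeng bounds together with a matching weight argument.

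Concretely, I would proceed as follows. First, identify the nonzeros (the defining set) of $\tilde{\C}_{(2,m,\delta_3)}^\perp$ as the set of exponents $-i \bmod n$ for $i$ ranging over $C_{\delta_1}\cup C_{\delta_2}\cup C_{\delta_3}$, using the standard duality for cyclic codes. Second, examine the cyclotomic cosets of these reciprocal exponents and search for the longest run (or the richest Hartmann--Tzeng configuration) of consecutive integers in the defining set of the dual, which yields a lower bound on $\tilde d^\perp$ via the BCH or Hartmann--Tzeng bound. For the odd-$m$ case I expect to locate a block of several consecutive values forcing $\tilde d^\perp \ge 7$, and for even $m$ a shorter block forcing $\tilde d^\perp \ge 5$. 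Third, establish the matching upper bound: exhibit an explicit low-weight codeword in the dual, or equivalently a low-weight linear dependence among the columns of a parity-check matrix of $\tilde{\C}_{(2,m,\delta_3)}$, showing a codeword of weight exactly $7$ (odd $m$) or $5$ (even $m$) exists. Since this theorem is attributed to Kasami~\cite{Kasa69}, I would at this stage simply invoke that reference for the final equality, while the BCH-type lower bound provides an independent sanity check.

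The main obstacle will be the upper-bound half: constructing an explicit minimum-weight word of the dual and verifying its weight is exactly $7$ or $5$ rather than merely bounding it from below. This requires solving, over $\gf(2^m)$, a small system of trace-type equations coming from the three generating monomials $x$, $x^{1+q^h}$, $x^{1+q^{h+1}}$, and checking that a support of the claimed size is genuinely realizable (nonzero and of the stated cardinality) for all admissible $m$ in each parity class. The parity split matters because the value of $\gcd(h,m)$ differs between odd and even $m$ (as already exploited in the proofs of Lemma~\ref{lem-3rdcosetleader} and Theorem~\ref{thm-bchccc3}), which changes the algebraic structure of these equations and hence the minimal achievable dual weight. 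Because the paper itself defers this computation to Kasami's work, the cleanest route is to cite~\cite{Kasa69} for the exact value and supply only the BCH/Hartmann--Tzeng lower bound as corroboration, thereby sidestepping the delicate weight-realizability calculation that constitutes the hard part.
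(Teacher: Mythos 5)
The paper supplies no argument for this theorem at all: it is introduced by the single sentence ``The following theorem is proved in \cite{Kasa69}.'' So the operative step of your proposal --- invoking Kasami for the exact value --- coincides with the paper's treatment. The trouble is the one piece of mathematics you add, the BCH/Hartmann--Tzeng ``sanity check,'' which fails for all but the smallest parameters and therefore cannot corroborate anything. For $q=2$ the zero set of the dual is $C_{n-\delta_1}\cup C_{n-\delta_2}\cup C_{n-\delta_3}$ (incidentally, this is the \emph{zero} set of the dual, not its set of nonzeros as you wrote), which for odd $m$ works out to $C_1\cup C_{1+2^{(m-1)/2}}\cup C_{1+2^{(m-3)/2}}$. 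Every element of these cosets has $2$-adic weight at most $2$, and $3\in C_{1+2^s}$ forces $s\equiv \pm 1\pmod{m}$, so $3$ is a zero of the dual only for $m\in\{3,5\}$. Hence for odd $m\ge 7$ the defining set of the dual contains no six consecutive residues (for $m=7$ the zeros are $C_1\cup C_9\cup C_5$ and the longest run is $\{8,9,10\}$), and the BCH bound stalls near $4$ rather than $7$; the even case behaves the same way for $m\ge 8$, where $3$ is a zero only when $m=6$. Only at $m=5$, where the dual is the classical triple-error-correcting BCH code with zeros $\{1,\ldots,6\}$, does the configuration you ``expect to locate'' actually exist.

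If you want a lower bound independent of the citation, the workable route is not through the defining set but through the weight distribution of the primal code (Tables \ref{tab-zhou3} and \ref{tab-zengli3}) combined with the MacWilliams identities and Pless power moments: the dual distance exceeds $t$ exactly when the first $t$ power moments of the $A_i$ agree with those of the ambient space, and the next moment then shows $A_7^\perp\neq 0$ for odd $m$ (respectively $A_5^\perp\neq 0$ for even $m$). But that computation presupposes the weight distributions, which this paper in turn takes from \cite{Kasa69}, so the citation ends up doing all the work --- exactly as in the paper. Your upper-bound half (exhibiting an explicit dual word of weight $7$ or $5$) is likewise deferred to Kasami, so after deleting the nonviable BCH step, what remains of your proposal is precisely the paper's one-line proof.
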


\begin{table}[ht]
\begin{center} 
\caption{Examples of  $\tilde{\C}_{(q,\, m,\, \delta_3)}$ of Theorem \ref{thm-bchccc3} }\label{tab-bchthm3}
\begin{tabular}{rrrrrl} \hline
$n$ &   $k$   & $d=\delta_3+1$  & $m$ & $q$  & Optimality  \\ \hline  
$15$ & $10$   & $4$                           & $4$  & $2$ &   Yes                 \\ 
$31$ & $15$   & $8$                           & $5$  & $2$ &   Yes                \\ 
$63$ & $15$   & $24$                           & $6$  & $2$ &    Yes                \\ 
$127$ & $21$   & $48$                           & $7$  & $2$ &  Best known                  \\ 
$255$ & $20$   & $112$                           & $8$  & $2$ &  Best known                  \\  
$26$ & $10$   & $9$                           & $3$  & $3$ &   No                 \\ 
 $80$ & $10$   & $45$                           & $4$  & $3$ &  Best known                  \\ 
$242$ & $15$   & $135$                           & $5$  & $3$ & No                   \\ \hline
\end{tabular}
\end{center} 
\end{table}

\begin{example} 
Let $(q, m)=(2,4)$. Then $\delta_3=3$, and $\tilde{\C}_{(q,\, m,\, \delta_3)}$ has parameters $[15, 10, 4]$ and 
weight enumerator $1+105z^4 + 280z^6 + 435z^8 + 168z^{10} + 35z^{12}$. 
\end{example} 

\begin{example} 
Let $(q, m)=(2,5)$. Then $\delta_3=7$, and $\tilde{\C}_{(q,\, m,\, \delta_3)}$ has parameters $[31, 15, 8]$ and 
weight enumerator $1+465z^8 + 8680z^{12} + 18259z^{16} + 5208z^{20} + 155z^{24}$. 
\end{example} 

\begin{example} 
Let $(q, m)=(3,4)$. Then $\delta_3=44$, and $\tilde{\C}_{(q,\, m,\, \delta_3)}$ has parameters $[80, 10, 45]$ and 
weight enumerator 
$$
1+3040z^{45} + 9900z^{48} + 10080z^{51} + 16640z^{54} + 14400z^{57} +3528z^{60} + 1440z^{63} +20z^{72}. 
$$
\end{example} 

\begin{example} 
Let $(q, m)=(3,5)$. Then $\delta_3=134$, and $\tilde{\C}_{(q,\, m,\, \delta_3)}$ has parameters $[242, 15, 135]$ and 
weight enumerator 
$$
1+29040z^{135} + 359370z^{144} +3855060 z^{153} + 6719372z^{162} + 3188592z^{171} + 182952z^{180} 
+ 14520z^{189}. 
$$
\end{example} 

The optimality of the code $\tilde{\C}_{(q,\, m,\, \delta_3)}$ is marked in Table \ref{tab-bchthm3}, where further examples 
of the code is documented. As shown in this table, the code $\tilde{\C}_{(q,\, m,\, \delta_3)}$ is sometimes optimal, and 
sometimes has the same parameters as the best linear code known.

\begin{theorem}\label{thm-bchccc4}
Let $m \geq 4$. 
The code $\C_{(q,\, m,\, \delta_3)}$ has papameters $[n,\, k,\, \delta_3]$, where where $d = \delta_3=(q-1)q^{m-1}-1-q^{\lfloor (m+1)/2\rfloor}$ and 
\begin{eqnarray}
k=\left\{ \begin{array}{ll}
3m+1  & \mbox{ for odd } m, \\
\frac{5m}{2}+1 & \mbox{ for even } m.  
\end{array}
\right.
\end{eqnarray}
\end{theorem}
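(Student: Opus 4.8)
The plan is to deduce the parameters of $\C_{(q,m,\delta_3)}$ from the already-established Theorem~\ref{thm-bchccc3} on its subcode $\tilde{\C}_{(q,m,\delta_3)}$, exploiting the relationship between the two codes. First I would recall that $\C_{(q,m,\delta_3)}$ is the narrow-sense primitive BCH code whose check polynomial is $m_{\delta_1}(x)m_{\delta_2}(x)m_{\delta_3}(x)$, while $\tilde{\C}_{(q,m,\delta_3)}$ has the extra factor $(x-1)$ in its generator polynomial; hence $\tilde{\C}_{(q,m,\delta_3)}$ is a subcode of $\C_{(q,m,\delta_3)}$ with $\dim(\C_{(q,m,\delta_3)})=\dim(\tilde{\C}_{(q,m,\delta_3)})+1$, as noted in the introduction. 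Combining this with the dimension formula for $\tilde{k}$ in Theorem~\ref{thm-bchccc3} immediately yields $k=3m+1$ for odd $m$ and $k=\frac{5m}{2}+1$ for even $m$.

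The substantive claim is the exact minimum distance $d=\delta_3$. The lower bound $d \geq \delta_3$ is free from the BCH bound, so the work is to exhibit a codeword of weight exactly $\delta_3$. The key structural observation is that, via Delsarte's Theorem (exactly as in the proof of Theorem~\ref{thm-bchccc2}), $\C_{(q,m,\delta_3)}$ is equivalent to the code obtained by augmenting $\tilde{\C}_{\delta_3}$ with the all-constant words, namely
\begin{eqnarray*}
\C_{\delta_3}=\left\{
\begin{array}{r}
\left(\tr\left(ax+bx^{1+q^{h}}+cx^{1+q^{h+1}}\right)+e\right)_{x \in \gf(q^m)^*}:
\\
a,b,c \in \gf(q^m),\, e \in \gf(q)
\end{array}
\right\},
\end{eqnarray*}
where $h=\lfloor (m-1)/2\rfloor+1$. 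So I would look among the codewords $\bc_{(a,b,c)}+e\mathbf{1}$ for one achieving weight $\delta_3$. The natural candidate is a minimum-weight codeword of $\tilde{\C}_{\delta_3}$: by Theorem~\ref{thm-bchccc3} there exist words of weight $\delta_3+1=(q-1)q^{m-1}-q^{\lfloor(m+1)/2\rfloor}$ in $\tilde{\C}_{\delta_3}$. Adding a suitable nonzero constant $e\in\gf(q)^*$ shifts the symbol values, and the plan is to show that for an appropriately chosen minimum-weight word one can turn a single previously-nonzero coordinate into a zero (or, dually, avoid creating a new nonzero coordinate), dropping the weight by exactly one to reach $\delta_3$.

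The hard part will be the careful counting needed to verify that adding a constant genuinely decreases the weight by one rather than increasing it. Concretely, for a codeword $\bc_{(a,b,c)}$ of weight $w$, the weight of $\bc_{(a,b,c)}+e\mathbf{1}$ depends on the number $N_{e}$ of coordinates $x\in\gf(q^m)^*$ at which $\tr(ax+bx^{1+q^h}+cx^{1+q^{h+1}})=-e$; one gains zeros where the value equals $-e$ and loses zeros where it was $0$. I would therefore need the distribution of symbol values, not merely the Hamming weights, of the minimum-weight words classified in the weight-distribution tables (Tables~\ref{tab-zhou3}--\ref{tab-zengli5}), and I expect the relevant value-frequency data to come from refining the character-sum computations underlying Theorem~\ref{thm-bchccc3} (i.e.\ the Gauss-sum and Coulter-type evaluations, generalized to this three-term quadratic form). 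Once a single coordinate of value $-e$ exists for some minimum-weight word, choosing that $e$ produces a word of weight $\delta_3$, establishing $d=\delta_3$. As an alternative route that sidesteps the value-distribution analysis, one could instead appeal to the explicit weight enumerators of $\C_{\delta_3}$ obtainable by the same Pless-power-moment method used in Theorem~\ref{thm-bchccc1}, reading off $A_{\delta_3}>0$ directly; this is likely the cleaner path, with the main obstacle being the bookkeeping of the augmented code's three nonzero character-sum regimes.
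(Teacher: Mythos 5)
Your overall frame is the same as the paper's: the paper likewise obtains $k$ from the coset-leader lemmas (equivalently, from Theorem \ref{thm-bchccc3} together with $\dim(\C)=\dim(\tilde{\C})+1$), gets $d\geq \delta_3$ from the BCH bound, passes via Delsarte's theorem to exactly your augmented trace code $\C_{\delta_3}$ with the constant term $e$, and then --- just as you do --- defers the equality $d=\delta_3$ to a refinement of the character-sum computations in \cite{Kasa69}, \cite{zengseta}, \cite{Zeng10}, \cite{ZhouDCC}, omitting the details. (One slip in your setup: the check polynomial of $\C_{(q,m,\delta_3)}$ is $(x-1)m_{\delta_1}(x)m_{\delta_2}(x)m_{\delta_3}(x)$; the polynomial $m_{\delta_1}(x)m_{\delta_2}(x)m_{\delta_3}(x)$ without the factor $x-1$ is the check polynomial of the subcode $\tilde{\C}_{(q,m,\delta_3)}$. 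Your displayed code with the constant $e$ is nonetheless the correct representation of $\C_{(q,m,\delta_3)}$.)

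The genuine gap is your choice of candidate codeword. What is needed is a codeword $\bc_{(a,b,c)}$ of $\tilde{\C}_{\delta_3}$ for which some nonzero value $-e$ is attained exactly $q^{m-1}+q^{\lfloor (m+1)/2\rfloor}$ times on $\gf(q^m)^*$; then $\wt(\bc_{(a,b,c)}+e\mathbf{1})=\delta_3$. You insist this should come from a \emph{minimum}-weight word of $\tilde{\C}_{\delta_3}$, and that provably fails for $q=2$, a case the theorem covers: there the only nonzero constant is $\mathbf{1}$, and $\wt(\bc+\mathbf{1})=n-\wt(\bc)$, so a word of weight $\delta_3+1$ lifts to weight $2^{m-1}+2^{\lfloor (m+1)/2\rfloor}-1\neq\delta_3$ --- adding the constant destroys \emph{all} $2^{m-1}+2^{\lfloor (m+1)/2\rfloor}-1$ zero coordinates, so the weight cannot drop by one. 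In the binary case the weight-$\delta_3$ words of $\C_{(2,m,\delta_3)}$ are instead the complements of the \emph{maximum}-weight words of $\tilde{\C}_{(2,m,\delta_3)}$, of weight $2^{m-1}+2^{\lfloor (m+1)/2\rfloor}$, which exist by Tables \ref{tab-zhou3} and \ref{tab-zengli3}; so your argument must branch on $q$ and on which weight class of $\tilde{\C}_{\delta_3}$ supplies the required value count, and the minimum-weight class is not always it. For odd $q$ your plan is plausible (the analogue in the proof of Theorem \ref{thm-bchccc1}, Case 2.1, gives $N_v=q^{m-1}+q^{(m-2)/2}=N_0$ for every $v\notin\{0,u\}$, a set that is nonempty only because $q\geq 3$), but verifying it is precisely the value-distribution computation you defer. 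Your alternative Pless-power-moment route does not bypass this: the moment method presupposes knowing the weight set of $\C_{\delta_3}$ in advance and enough dual-distance information (the paper establishes the dual distance only for $q=2$, in Theorem \ref{thm-dualdis181}), so it requires the same character-sum analysis.
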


\begin{proof}
The conclusions on the dimension $k$ follow from Lemmas \ref{lem-1stcosetleader}, \ref{lem-2ndcosetleader} and 
\ref{lem-3rdcosetleader}. 
By the BCH bound,  the minimum distance $d \geq \delta_3$. 

It follows from the definition of $\C_{(q,\, m,\, \delta_3)}$ and Lemmas \ref{lem-1stcosetleader}, 
\ref{lem-2ndcosetleader} and \ref{lem-3rdcosetleader} that the check polynomial of this code 
is $(x-1)m_{\delta_1(x)}m_{\delta_2(x)}m_{\delta_3(x)}$. 
Notice that 
$$ 
\delta_3=(q-1)q^{m-1}-1-q^{\lfloor (m+1)/2\rfloor} = n-(q^{m-1}+q^{\lfloor (m+1)/2\rfloor} ). 
$$

From Delsarte's Theorem we then deduce that $\C_{(q,\, m,\, \delta_3)}$ is equal to the following 
code 
\begin{eqnarray*}
\C_{\delta_3} =  
 \left\{
\begin{array}{r}
\left(\tr\left(ax+bx^{1+q^{\lfloor (m-1)/2\rfloor+1}}+cx^{1+q^{\lfloor (m+1)/2\rfloor+1}}\right)+e\right)_{x \in \gf(q^m)^*}:  \\
                           a \in \gf(q^m),\, b \in \gf(q^m),\,  c \in \gf(q^m),\, e \in \gf(q)
\end{array} 
\right\}. 
\end{eqnarray*}

Similarly, the weights and their frequencies of the codewords in $\C_{(q,\, m,\, \delta_3)}$ are determined by the 
affine and quadratic functions  
$$ 
\tr\left(ax+bx^{1+q^{\lfloor (m-1)/2\rfloor+1}}+cx^{1+q^{\lfloor (m+1)/2\rfloor+1}}\right)+e. 
$$
One can refine the proofs in \cite{Kasa69}, \cite{zengseta}, \cite{Zeng10} and \cite{ZhouDCC}, to prove that $d=\delta_3$. 
We omit the lengthy details here. 
\end{proof}

\begin{table}[ht]
\begin{center} 
\caption{Examples of  $\C_{(q,\, m,\, \delta_3)}$ of Theorem \ref{thm-bchccc4} }\label{tab-bchthm4}
\begin{tabular}{rrrrrl} \hline
$n$ &   $k$   & $d=\delta_3$  & $m$ & $q$  & Optimality  \\ \hline  
$15$ & $11$   & $3$                           & $4$  & $2$ &  Yes                  \\ \hline
$31$ & $16$   & $7$                           & $5$  & $2$ & No (optimal $d=8$)                   \\ \hline
$63$ & $16$   & $23$                           & $6$  & $2$ & Best known                   \\ \hline
$127$ & $22$   & $47$                           & $7$  & $2$ & Best known                   \\ \hline
$255$ & $21$   & $111$                           & $8$  & $2$ & Best known                   \\ \hline 
$26$ & $11$   & $8$                           & $3$  & $3$ &     No (best $d=9$)               \\ \hline
 $81$ & $11$   & $44$                           & $4$  & $3$ &   No (best $d=45$)                 \\ \hline
$242$ & $16$   & $134$                           & $5$  & $3$ &  No (best $d=135$)                  \\ \hline
\end{tabular}
\end{center} 
\end{table} 

Examples of the code $\C_{(q,\, m,\, \delta_3)}$ are listed in Table \ref{tab-bchthm4}. Some of them are optimal in 
the sense that they meet some bound on linear codes according to the Database. Some of them have the same 
parameters as the best codes known in the Database. When $(q,m)=(3,3)$, the code $\C_{(q,\, m,\, \delta_3)}$ 
has parameters $[26, 11, 8]$, which are the best possible according to \cite[p. 300]{Dingbk15}.  

\begin{theorem}\label{thm-dim3c}
Let $q>2$. 
The codes $\tilde{\C}_{(q,\, 3,\, q^3-q^2-q-2)}$ and $\C_{(q,\, 3,\, q^3-q^2-q-2)}$ have parameters 
$$ 
[q^3-1,\, 7,\, \tilde{d} \geq q^3-q^2-q-1] \mbox{ and } [q^3-1,\, 8,\, d \geq q^3-q^2-q-2],  
$$
respectively. 
\end{theorem}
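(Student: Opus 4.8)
The plan is to reduce the whole statement to a coset-leader count for the two dimensions together with a direct application of the BCH bound for the two distance estimates; no weight-distribution or Gauss-sum machinery is needed, since the theorem asserts only lower bounds on $d$ and $\tilde{d}$. Write $m=3$, $n=q^3-1$ and $\delta=q^3-q^2-q-2$. Because the defining set of the narrow-sense code $\C_{(q,3,\delta)}$ is $Z=\bigcup_{1\le j\le \delta-1}C_j$, an index $i$ lies outside $Z$ precisely when $i=0$ or the coset leader of $C_i$ is at least $\delta$. Hence
$$
\dim\C_{(q,3,\delta)}=1+\sum_{\mathrm{clead}(C)\ge \delta}|C|,
$$
and $\dim\tilde{\C}_{(q,3,\delta)}=\dim\C_{(q,3,\delta)}-1$ by the general relation $\dim(\tilde{\C}_{(q,m,\delta)})=\dim(\C_{(q,m,\delta)})-1$ recorded in the introduction. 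So the entire dimension problem is to list the cosets whose leader is at least $\delta$.

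The key observation I would make is the factorization $\delta=q^3-q^2-q-2=(q-2)(q^2+q+1)$. Since $q^2+q+1=(q^3-1)/(q-1)=n/(q-1)$ divides $n$, every multiple $k(q^2+q+1)$ is fixed by multiplication by $q$ modulo $n$, so $C_\delta=\{\delta\}$ is a singleton and $\delta$ is trivially its own coset leader. Moreover $\delta=\delta_2-1$, where $\delta_2=q^3-q^2-q-1$ is the second largest coset leader of Lemma \ref{lem-2ndcosetleader} (which also gives $|C_{\delta_2}|=3$ for odd $m$), while $\delta_1=q^3-q^2-1$ is the largest coset leader of Lemma \ref{lem-1stcosetleader}, with $|C_{\delta_1}|=3$. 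By these two lemmas there is no coset leader strictly above $\delta_2$ other than $\delta_1$, and there is no integer strictly between $\delta_2-1$ and $\delta_2$. Consequently the cosets with leader $\ge\delta$ are exactly $C_\delta$, $C_{\delta_2}$ and $C_{\delta_1}$, of sizes $1$, $3$ and $3$.

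Feeding these into the displayed formula yields $\dim\C_{(q,3,\delta)}=1+(1+3+3)=8$ and $\dim\tilde{\C}_{(q,3,\delta)}=7$, as claimed; note the hypothesis $q>2$ is exactly what guarantees $\delta=(q-2)(q^2+q+1)\ge 1$, so that $\delta$ is a genuine design distance (for $q=2$ it degenerates to $0$). For the distances I would invoke the BCH bound directly: the defining set of $\C_{(q,3,\delta)}$ contains the $\delta-1$ consecutive integers $1,2,\dots,\delta-1$, giving $d\ge\delta=q^3-q^2-q-2$, while the defining set of $\tilde{\C}_{(q,3,\delta)}$ contains the $\delta$ consecutive integers $0,1,\dots,\delta-1$, giving $\tilde{d}\ge\delta+1=q^3-q^2-q-1$.

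The only genuinely non-routine step is spotting the factorization $\delta=(q-2)(q^2+q+1)$, which forces $C_\delta$ to be a singleton and thereby explains why lowering the design distance from $\delta_2$ to $\delta_2-1$ raises the dimension by exactly one rather than by the full coset size $3$. Once this is seen, the rest is the bookkeeping supplied by Lemmas \ref{lem-1stcosetleader}--\ref{lem-3rdcosetleader} together with the textbook BCH bound, so I expect the write-up to be short.
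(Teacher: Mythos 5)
Your proposal is correct and follows essentially the same route as the paper's (very terse) proof: identify $\delta=\delta_2-1$ as a coset leader whose coset $C_\delta$ is a singleton, count the cosets with leader at least $\delta$ (namely $C_{\delta_1}$, $C_{\delta_2}$, $C_{\delta}$ of sizes $3$, $3$, $1$, using Lemmas \ref{lem-1stcosetleader} and \ref{lem-2ndcosetleader}) to obtain the dimensions $8$ and $7$, and invoke the BCH bound for the two distance lower bounds. Your factorization $\delta=(q-2)(q^2+q+1)$, a multiple of $n/(q-1)$ and hence fixed under multiplication by $q$ modulo $n$, supplies an explicit justification of the singleton claim $|C_{\delta_3}|=1$ that the paper merely asserts with ``one can similarly prove,'' so your write-up is if anything more complete than the original.
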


\begin{proof}
When $m=3$, one can similarly prove that the third largest coset leader $\delta_3=\delta_2-1=q^3-q^2-q-2$ and 
$|C_{\delta_3}|=1$. The conclusions on the dimensions of $\tilde{\C}_{(q,\, 3,\, q^3-q^2-q-2)}$ and $\C_{(q,\, 3,\, q^3-q^2-q-2)}$ 
follow from Lemmas \ref{lem-1stcosetleader} and \ref{lem-2ndcosetleader}. The conclusions on the minimum 
distances follow from the BCH bound. 
\end{proof}

We conjecture that $\tilde{d} = q^3-q^2-q-1$ and $d = q^3-q^2-q-2$ for the two codes in Theorem \ref{thm-dim3c}, 
and invite the reader to settle this conjecture.

The following theorem follows from Theorem \ref{thm-dualdis181}, as  $\C_{(2, m, \delta_3)}^\perp$ is the even-weight 
subcode of $\tilde{\C}_{(2, m, \delta_3)}^\perp$. 

\begin{theorem}
The minimum distance $d^\perp$ of the dual of $\C_{(2, m, \delta_3)}$ is equal to $8$ when $m \geq 5$ 
is odd, and $6$ when $m \geq 6$ is even. 
\end{theorem}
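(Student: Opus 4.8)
The plan is to deduce the result directly from Theorem~\ref{thm-dualdis181} by first identifying $\C_{(2,m,\delta_3)}^\perp$ precisely as the even-weight subcode of $\tilde{\C}_{(2,m,\delta_3)}^\perp$, and then exploiting the parity of $\tilde{d}^\perp$. First I would record the structural facts. Since $\tilde{g}_{(2,m,\delta_3)}(x)=(x-1)g_{(2,m,\delta_3)}(x)$, the code $\tilde{\C}_{(2,m,\delta_3)}$ is a subcode of $\C_{(2,m,\delta_3)}$ of codimension one, so dualising gives $\C_{(2,m,\delta_3)}^\perp\subset\tilde{\C}_{(2,m,\delta_3)}^\perp$ with $\dim\tilde{\C}_{(2,m,\delta_3)}^\perp=\dim\C_{(2,m,\delta_3)}^\perp+1$. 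Because $(x-1)\mid\tilde{g}_{(2,m,\delta_3)}(x)$, every codeword $c(x)\in\tilde{\C}_{(2,m,\delta_3)}$ satisfies $c(1)=0$, i.e.\ has even Hamming weight, whence the all-ones vector $\mathbf{1}$ lies in $\tilde{\C}_{(2,m,\delta_3)}^\perp$. On the other hand $1$ is not a root of $g_{(2,m,\delta_3)}(x)$, so $g_{(2,m,\delta_3)}(x)\mid(x^n-1)/(x-1)$ and $\mathbf{1}\in\C_{(2,m,\delta_3)}$; therefore every codeword of $\C_{(2,m,\delta_3)}^\perp$ is orthogonal to $\mathbf{1}$ and has even weight.

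Combining these observations, $\C_{(2,m,\delta_3)}^\perp$ consists only of even-weight words and has codimension one in $\tilde{\C}_{(2,m,\delta_3)}^\perp$, while $\mathbf{1}\in\tilde{\C}_{(2,m,\delta_3)}^\perp$ has odd weight $n=2^m-1$. Thus $\tilde{\C}_{(2,m,\delta_3)}^\perp=\C_{(2,m,\delta_3)}^\perp\cup(\mathbf{1}+\C_{(2,m,\delta_3)}^\perp)$, and the even-weight words are exactly $\C_{(2,m,\delta_3)}^\perp$; this is the asserted identity. With it in hand the lower bound is immediate: every nonzero word of $\C_{(2,m,\delta_3)}^\perp$ lies in $\tilde{\C}_{(2,m,\delta_3)}^\perp$, so its weight is at least $\tilde{d}^\perp$, and being even it cannot equal the odd value $\tilde{d}^\perp$, which is $7$ for odd $m\ge 5$ and $5$ for even $m\ge 6$ by Theorem~\ref{thm-dualdis181}. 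Hence $d^\perp\ge\tilde{d}^\perp+1$, i.e.\ $d^\perp\ge 8$ for odd $m$ and $d^\perp\ge 6$ for even $m$.

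It remains to prove the matching upper bound, which is the only substantive step: I must exhibit an even-weight codeword of $\tilde{\C}_{(2,m,\delta_3)}^\perp$ of weight exactly $\tilde{d}^\perp+1$. This does not follow from the minimum distance alone, since the odd minimum-weight words sit in the coset $\mathbf{1}+\C_{(2,m,\delta_3)}^\perp$ and their complements have weight $n-\tilde{d}^\perp$ rather than $\tilde{d}^\perp+1$; I expect this to be the main obstacle. Two routes are available. The first is to invoke the full weight distribution of these duals of double- and triple-error-correcting primitive BCH codes computed by Kasami in \cite{Kasa69}, which records a nonzero frequency at the even weight $\tilde{d}^\perp+1$; the even-weight subcode then contains such a word, giving $d^\perp\le\tilde{d}^\perp+1$. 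The second, more self-contained route uses cyclicity of $\tilde{\C}_{(2,m,\delta_3)}^\perp$: taking a minimum-weight word $c$ together with a cyclic shift $\sigma^j(c)$, the sum $c+\sigma^j(c)$ lies in the even-weight subcode $\C_{(2,m,\delta_3)}^\perp$ and has weight $2\tilde{d}^\perp-2\,|\mathrm{supp}(c)\cap\mathrm{supp}(\sigma^j(c))|$, so one shows that for some shift the overlap equals $(\tilde{d}^\perp-1)/2$, forcing weight $\tilde{d}^\perp+1$. Either way equality holds, yielding $d^\perp=8$ for odd $m\ge 5$ and $d^\perp=6$ for even $m\ge 6$.
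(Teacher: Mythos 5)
Your proposal is correct and takes essentially the same route as the paper, whose entire proof is the single observation that $\C_{(2,m,\delta_3)}^\perp$ is the even-weight subcode of $\tilde{\C}_{(2,m,\delta_3)}^\perp$ combined with Theorem \ref{thm-dualdis181}; your careful derivation of that subcode identity and your first route through Kasami's weight distributions (which do give a positive count at the even weight $\tilde{d}^\perp+1$) simply make explicit the attainment step that the paper leaves implicit. One caution: your second, ``self-contained'' route is not a proof as sketched, since the even-subcode bound only forces every overlap $|\mathrm{supp}(c)\cap\mathrm{supp}(\sigma^j(c))|$ to be \emph{at most} $(\tilde{d}^\perp-1)/2$, while nothing in the sketch shows some shift attains this value (the average overlap is merely $\tilde{d}^\perp(\tilde{d}^\perp-1)/(n-1)$, far below it), so the weight-distribution argument should be regarded as the actual proof of the upper bound.
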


\section{Summary and concluding remarks} 

The first contribution of this paper is the determination of the second and third largest $q$-cyclotomic coset leaders 
$\delta_2$ and $\delta_3$, which are documented in Lemmas \ref{lem-2ndcosetleader} and \ref{lem-3rdcosetleader}. 
The second contribution is the establishment of the dimensions, minimum distances and the weight distributions  
of the primitive BCH codes $\tilde{\C}_{(q,\, m,\, \delta_2)}$, $\tilde{\C}_{(q,\, m,\, \delta_3)}$, and the settlement  
of the dimensions and minimum distances of $\C_{(q,\, m,\, \delta_2)}$ and $\C_{(q,\, m,\, \delta_3)}$ for the case that 
$q$ is an odd prime. Sometimes a direct proof of the weight distribution formulas was given. Sometimes a bridge between 
these BCH codes and some cyclic codes with known weight distribution was established and employed to derive the weight 
distribution of the BCH codes. 

The third contribution of this paper is the characterisation of all the codewords in the code $\tilde{\C}_{(q,m,\delta_2)}$ 
with minimum weight documented in Theorem \ref{thm-minimumwt2}. This may be employed to derive $t$-designs with  
this code. It is noticed that characterising all the codewords with minimum weight in a linear code is a very difficult problem 
in general.  

Nonbinary Kasami codes were introduced and their weight distributions were settled in \cite{zengseta} and a class of cyclic 
codes were defined in \cite{Zeng10}. The fourth contribution of this paper is the proof of the fact that a subclass of the 
nonbinary Kasami codes and a subclass of cyclic codes treated in \cite{Zeng10} are actually equivalent to the BCH codes 
$\tilde{\C}_{(q,\, m,\, \delta_3)}$. The code $\tilde{\C}_{\delta_2}$ of (\ref{eqn-codeCdelta2}) was known to be equivalent 
to a cyclic code in the literature. Our contribution regarding this code is to prove that it is equivalent to the BCH code 
$\tilde{\C}_{(q,\, m,\, \delta_2)}$. 

A number of examples of the codes $\tilde{\C}_{(q,\, m,\, \delta_i)}$ and $\C_{(q,\, m,\, \delta_i)}$ 
for $i \in \{2, 3\}$ were worked out and put into several tables. Most of them are either optimal or almost 
optimal linear codes. It was also known that $\tilde{\C}_{(q,\, m,\, \delta_1)}$ and $\C_{(q,\, m,\, \delta_1)}$ 
are optimal with respect to the Griesmer bound \cite{Ding15}. The list of tables of best cyclic codes documented 
in \cite{Dingbk15} shows that BCH codes are among the best cyclic codes except in a few cases. Hence, it would 
be worthy to further investigate primitive BCH codes. 

To the best of our knowledge, the weight distributions of only a few classes of primitive BCH codes are determined 
in the literature. These codes are the following: 
\begin{enumerate}
\item The Reed-Solomon codes. 
\item The codes documented in this paper and two more subclasses of binary primitive BCH codes dealt with in \cite{Kasa69}.  
\end{enumerate}      
It is in general very difficult to determine the dimensions and minimum distances of BCH codes, let alone their weight distributions. 
For the weight distribution of cosets of some binary primitive BCH codes, the reader is referred to \cite{Charpin94,CharpinZeno} for information. 

Finally, we point out an application of some of the codes of this paper in secret sharing. Any linear code over
$\gf(q)$ can be employed to construct secret sharing schemes \cite{ADHK,CDY05,Mass93,YD06}. In order to have such
secret sharing scheme with interesting access structures, we need a linear code $\C$ over $\gf(q)$ such that
\begin{eqnarray}\label{eqn-sss}
\frac{w_{min}}{w_{max}}>\frac{q-1}{q},
\end{eqnarray}
where $w_{max}$ and $w_{min}$ denote the maximum and minimum nonzero weight in $\C$, respectively.

The codes $\tilde{\C}_{(q,\, m,\, \delta_2)}$ and $\tilde{\C}_{(q,\, m,\, \delta_3)}$ satisfy the inequality 
in (\ref{eqn-sss}) when $m \geq 5$, and can be employed to obtain secret sharing schemes with interesting 
access structures using the framework documented in \cite{ADHK,CDY05,Mass93,YD06}.

\section*{Acknowledgements}

The authors would thank Dr. Pascale Charpin for providing information on known results on narrow-sense primitive BCH codes.

\end{document}